\tikzset{LMC style/.style={>=angle 60,every edge/.append style={thick},every state/.style={thick,minimum size=20,inner sep=0.5}}}
\newcommand{\techRep}{false} 
\newcommand{\iftechrep}{\ifthenelse{\equal{\techRep}{false}}}
\theoremstyle{plain}
\newtheorem{proposition}[theorem]{Proposition}
\theoremstyle{definition}
\newtheorem{remarkS}[theorem]{Remark}
\newcommand{\acc}{f_+}
\newcommand{\eow}{\$}
\newcommand{\A}{\mathcal{A}}
\newcommand{\F}{\mathbb{F}}
\newcommand{\M}{\mathcal{M}}
\newcommand{\N}{\mathbb{N}}
\newcommand{\nacc}{\#\mathit{acc}}
\newcommand{\pmin}{p_\mathit{min}}
\newcommand{\Q}{\mathbb{Q}}
\newcommand{\R}{\mathbb{R}}
\newcommand{\round}[1]{\langle #1 \rangle}
\newcommand{\rej}{f_-}
\newcommand{\T}{\mathcal{T}}
\renewcommand{\vec}[1]{\mathbf{#1}}
\newcommand{\Z}{\mathbb{Z}}
\title{On Computing the Total Variation Distance of Hidden Markov Models}
\author{Stefan Kiefer}
\affil{University of Oxford, United Kingdom}
\authorrunning{S. Kiefer} 
\subjclass{F.1.1 Models of Computation, F.2.1 Numerical Algorithms and Problems, G.3 Probability and Statistics}
\keywords{Labelled Markov Chains, Hidden Markov Models, Distance, Decidability, Complexity}
\begin{document}

\pagestyle{headings}  

\maketitle              

\begin{abstract}
We prove results on the decidability and complexity of computing the total variation distance (equivalently, the $L_1$-distance) of hidden Markov models (equivalently, labelled Markov chains).
This distance measures the difference between the distributions on words that two hidden Markov models induce.
The main results are: (1) it is undecidable whether the distance is greater than a given threshold; (2)~approximation is \#P-hard and in PSPACE.
\end{abstract}

\section{Introduction} \label{sec-intro}

A (discrete-time, finite-state, finite-word) \emph{labelled Markov chain (LMC)} (often called \emph{hidden Markov model}) has a finite set $Q$ of states and for each state a probability distribution over its outgoing transitions.
Each outgoing transition is labelled with a letter from an alphabet~$\Sigma$ and leads to a target state, or is labelled with an end-of-word symbol~$\eow$.
Here are two LMCs:
\begin{center}
\begin{tikzpicture}[scale=2.5,LMC style]
\node[state] (q1) at (0,0) {$q_1$};
\path[->] (q1) edge [loop,out=110,in=70,looseness=13] node[pos=0.25,left] {$\frac12 a$} (q1);
\path[->] (q1) edge [loop,out=250,in=290,looseness=13] node[pos=0.25,left] {$\frac14 b$} (q1);
\path[->] (q1) edge node[above] {$\frac14 \eow$} (0.7,0);
\node[state] (q2) at (2,0) {$q_2$};
\node[state] (q3) at (3,0) {$q_3$};
\path[->] (q2) edge [loop,out=110,in=70,looseness=13] node[pos=0.25,left] {$\frac13 a$} (q2);
\path[->] (q2) edge [loop,out=250,in=290,looseness=13] node[pos=0.25,left] {$\frac13 b$} (q2);
\path[->] (q2) edge node[above] {$\frac13 a$} (q3);
\path[->] (q3) edge [loop,out=110,in=70,looseness=13] node[pos=0.25,left] {$\frac12 a$} (q3);
\path[->] (q3) edge node[above] {$\frac12 \eow$} (3.7,0);
\end{tikzpicture}
\end{center}
The LMC starts in a given initial state (or in a random state according to a given initial distribution), picks a random transition according to the state's distribution over the outgoing transitions, outputs the transition label, moves to the target state, and repeats until the end-of-word label~$\eow$ is emitted. 
This induces a probability distribution over finite words (excluding the end-of-word label~$\eow$).
In the example above, if $q_1$ and $q_2$ are the initial states then the LMCs induce distributions $\pi_1, \pi_2$ with $\pi_1(a a) = \frac12 \cdot \frac12 \cdot \frac14$ and $\pi_2(a a) = \frac13 \cdot \frac13 \cdot \frac12 + \frac13 \cdot \frac12 \cdot \frac12$.
LMCs are widely employed in fields such as speech recognition (see~\cite{Rabiner89} for a tutorial),
gesture recognition~\cite{Gesture},
signal processing~\cite{SignalProcessing},
and climate modeling~\cite{Weather}.
LMCs are heavily used in computational biology~\cite{HMM-comp-biology},
more specifically in DNA modeling~\cite{DNA-modeling} and biological sequence analysis~\cite{durbin1998biological},
including protein structure prediction~\cite{ProteinStructure} 
and gene finding~\cite{GeneFinding}.
In computer-aided verification, LMCs are the most fundamental model for probabilistic systems; model-checking tools such as Prism~\cite{KNP11} or Storm~\cite{Storm} are based on analyzing LMCs efficiently.
 
A fundamental yet non-trivial question about LMCs is whether two LMCs generate the same distribution on words.
This problem itself has applications in verification~\cite{KMOWW:CAV11} and can be solved in polynomial time using algorithms that are based on linear algebra~\cite{Schutzenberger,Paz71,CortesMRdistance}.
If two such distributions are not equal, one may ask how different they are.
There exist various \emph{distances} between discrete distributions, see, e.g., \cite[Section~3]{CortesMRdistance}.
One of them is the total variation distance (in the following just called \emph{distance}), which can be defined by $d(\pi_1, \pi_2) = \max_{W \subseteq \Sigma^*}|\pi_1(W) - \pi_2(W)|$ in the case of LMCs.
That is, $d(\pi_1, \pi_2)$ is the largest possible difference between probabilities that $\pi_1$ and~$\pi_2$ assign to the same set of words.
This distance is, up to a factor~$2$, equal to the $L_1$-norm of the difference between $\pi_1$ and~$\pi_2$, i.e., $2 d(\pi_1, \pi_2) = \sum_{w \in \Sigma^*} |\pi_1(w) - \pi_2(w)|$.
Clearly, $\pi_1$ and~$\pi_2$ are equal if and only if their distance is~$0$.

It is immediate from the definition of the distance that if $L$~is a family of LMCs whose pairwise distances are bounded by $b \ge 0$ then for any event~$W \subseteq \Sigma^*$ and any two LMCs $\M_1, \M_2 \in L$ we have $|\pi_1(W) - \pi_2(W)| \le b$.
From a verification point of view, this means that one needs to model check only one LMC in the family to obtain an approximation within~$b$ for the probabilities that the LMCs satisfy a given property~$W$.
Therefore, computing or approximating the distance can make model checking more efficient.
It is shown in~\cite{ChenBW12} that the \emph{bisimilarity pseudometric} defined in~\cite{DesharnaisGJPmetrics} is an upper bound on the total variation distance and can be computed in polynomial time.
The bisimilarity pseudometric has more direct bearings on \emph{branching-time} system properties, which, in addition to emitted labels, take LMC states into account (not considered in this paper).

The problem of computing the distance was first studied in~\cite{LyngsoP02}: they show that computing the distance is NP-hard.
In~\cite{CortesMRdistance} it was shown that even approximating the distance within an $\varepsilon>0$ given in binary is NP-hard.
In this paper we improve these results.
We show that it is undecidable whether the distance is greater than a given threshold.
Further we show that approximating the distance is \#P-hard and in PSPACE.
The \#P-hardness construction is relatively simple, perhaps simpler than the construction underlying the NP-hardness result in~\cite{CortesMRdistance}.
In contrast, our PSPACE algorithm requires a combination of special techniques: rounding-error analysis in floating-point arithmetic and Ladner's result~\cite{Ladner89} on counting in polynomial space.

\section{Preliminaries} \label{sec-prelim}


Let $Q$ be a finite set.
We view elements of $\R^Q$ as \emph{vectors}, more specifically as row vectors.
We write $\vec{1}$ for the all-1 vector, i.e., the element of $\{1\}^Q$.
For a vector $\mu \in \R^Q$, we denote by~$\mu^\top$ its transpose, a column vector.
A vector $\mu \in [0,1]^Q$ is a \emph{distribution} 
\emph{over~$Q$} if $\mu \vec{1}^\top = 1$. 
For $q \in Q$ we write~$\delta_q$ for the (\emph{Dirac}) distribution over~$Q$ with $\delta_q(q) = 1$ and $\delta_q(r) = 0$ for $r \in Q \setminus\{q\}$.
We view elements of $\R^{Q \times Q}$ as \emph{matrices}.
A matrix $M \in [0,1]^{Q \times Q}$ is called \emph{stochastic} if each row sums up to one, i.e.,
 $M \vec{1}^\top = \vec{1}^\top$.

\begin{definition}
A \emph{labelled (discrete-time, finite-state, finite-word) Markov chain (LMC)} is a quadruple $\M = (Q, \Sigma, M, \eta)$ where
 $Q$ is a finite set of states,
 $\Sigma$ is a finite alphabet of labels,
 the mapping $M: \Sigma \to [0,1]^{Q \times Q}$ specifies the transitions,
 and $\eta \in [0,1]^Q$, with $\eta^\top + \sum_{a \in \Sigma} M(a) \vec{1}^\top = \vec{1}^\top$, specifies the end-of-word probability of each state.
\end{definition}
Intuitively, if the LMC is in state~$q$, then with probability~$M(a)(q,q')$ it emits~$a$ and moves to state~$q'$, and with probability~$\eta(q)$ it stops emitting labels.
For the complexity results in this paper, we assume that all numbers in~$\eta$ and in the matrices~$M(a)$ for $a \in \Sigma$ are rationals given as fractions of integers represented in binary.
We extend $M$ to the mapping $M: \Sigma^* \to [0,1]^{Q \times Q}$ with $M(a_1 \cdots a_k) = M(a_1) \cdots M(a_k)$ for $a_1, \ldots, a_k \in \Sigma$.
Intuitively, if the LMC is in state~$q$ then with probability~$M(w)(q,q')$ it emits the word~$w \in \Sigma^*$ and moves (in $|w|$ steps) to state~$q'$.
We require that each state of an LMC have a positive-probability path to some state~$q$ with $\eta(q) > 0$.

Fix an LMC~$\M = (Q, \Sigma, M, \eta)$ for the rest of this section.
%
To an (initial) distribution $\pi$ over~$Q$ we associate the discrete probability space $(\Sigma^*,2^{\Sigma^*},\Pr_\pi)$ with $\Pr_\pi(w) := \Pr_\pi(\{w\}) := \pi M(w) \eta^\top$.
To avoid clutter and when confusion is unlikely, we may identify the 
 distribution~$\pi \in [0,1]^Q$ with its induced 
 probability measure~$\Pr_\pi$;
 i.e., for a word or set of words~$W$ we may write $\pi(W)$ instead of~$\Pr_\pi(W)$.

Given two initial distributions $\pi_1, \pi_2$,
 the \emph{(total variation) distance} between $\pi_1$ and~$\pi_2$ is defined as follows:%
\footnote{One could analogously define the total variation distance between two LMCs $\M_1 = (Q_1, \Sigma, M_1, \eta_1)$ and $\M_2 = (Q_2, \Sigma, M_2, \eta_2)$
with initial distributions $\pi_1$ and~$\pi_2$ over $Q_1$ and~$Q_2$, respectively.
Our definition is without loss of generality, as one can take the LMC $\M = (Q, \Sigma, M, \eta)$ where $Q$ is the disjoint union of $Q_1$ and~$Q_2$,
and $M, \eta$ are defined using $M_1, M_2, \eta_1, \eta_2$ in the straightforward manner.}
\[
 d(\pi_1, \pi_2) \ := \ \sup_{W \subseteq \Sigma^*} |\pi_1(W) - \pi_2(W)| \,.
\]
As $\pi_1(W) - \pi_2(W) = \pi_2(\Sigma^* \setminus W) - \pi_1(\Sigma^* \setminus W)$,
 we have $d(\pi_1, \pi_2) = \sup_{W \subseteq \Sigma^*} (\pi_1(W) - \pi_2(W))$.
The following proposition follows from basic principles, see, e.g.,~\cite[Lemma~11.1]{ProbAndComp}.
In particular, it says that the supremum is attained and the total variation distance is closely related to the $L_1$-distance:



\begin{proposition} \label{prop-tv-dist-basic}
Let $\M$ be an LMC.
For any two initial distributions $\pi_1, \pi_2$ we have:
\[
d(\pi_1, \pi_2)
 \ = \ \max_{W \subseteq \Sigma^*} (\pi_1(W) - \pi_2(W))
 \ = \ \frac12 \sum_{w \in \Sigma^*} | \pi_1(w) - \pi_2(w) |
\]
The maximum is attained by $W = \{w \in \Sigma^* : \pi_1(w) \ge \pi_2(w) \}$.
\end{proposition}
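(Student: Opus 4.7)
The plan is to prove the proposition by producing an explicit maximizer and then a direct computation linking the two expressions. Let $W^\star := \{w \in \Sigma^* : \pi_1(w) \ge \pi_2(w)\}$, and write $\overline{W^\star} := \Sigma^* \setminus W^\star$. Note first that $\sum_w |\pi_1(w)-\pi_2(w)| \le \sum_w \pi_1(w) + \sum_w \pi_2(w) = 2$, so every sum appearing below is absolutely convergent and may be rearranged freely.

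The first step is to show that $W^\star$ achieves the supremum. For an arbitrary $W \subseteq \Sigma^*$, decompose
\[
\pi_1(W) - \pi_2(W) \;=\; \sum_{w \in W \cap W^\star}\bigl(\pi_1(w)-\pi_2(w)\bigr) + \sum_{w \in W \cap \overline{W^\star}}\bigl(\pi_1(w)-\pi_2(w)\bigr).
\]
By the definition of $W^\star$, the first sum consists of non-negative terms and the second of strictly negative ones, so enlarging the first sum to range over all of $W^\star$ and dropping the second only increases the value. This yields $\pi_1(W)-\pi_2(W) \le \pi_1(W^\star)-\pi_2(W^\star)$. Combined with the identity $\pi_1(W)-\pi_2(W) = \pi_2(\overline{W})-\pi_1(\overline{W})$ already noted in the text, this proves both that the supremum is attained and that it equals $\max_{W} (\pi_1(W)-\pi_2(W))$ with witness $W^\star$.

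The second step is the $L_1$ identity. Splitting the absolute value along $W^\star$ and $\overline{W^\star}$ gives
\[
\sum_{w \in \Sigma^*} |\pi_1(w)-\pi_2(w)|
 \;=\; \bigl(\pi_1(W^\star)-\pi_2(W^\star)\bigr) + \bigl(\pi_2(\overline{W^\star})-\pi_1(\overline{W^\star})\bigr).
\]
Since $\pi_i(\overline{W^\star}) = 1 - \pi_i(W^\star)$ for $i \in \{1,2\}$, the two bracketed terms coincide, so the total equals $2(\pi_1(W^\star)-\pi_2(W^\star)) = 2\, d(\pi_1,\pi_2)$. Dividing by two yields the claimed formula.

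There is no real obstacle here; the only thing to be mildly careful about is that $\Sigma^*$ is countably infinite, so one should justify the rearrangement of the sum, which is immediate from the absolute-convergence bound above. Once that is in place the two steps chain together directly, and the explicit witness $W^\star$ stated in the proposition falls out of the first step.
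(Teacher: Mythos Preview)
Your argument is correct and is exactly the standard proof of this fact. The paper itself does not spell out a proof: it simply says the proposition ``follows from basic principles'' and cites \cite[Lemma~11.1]{ProbAndComp}. Your two-step argument (exhibit $W^\star$ as the maximizer by dropping negative terms and adding the missing non-negative ones, then split $\sum_w |\pi_1(w)-\pi_2(w)|$ along $W^\star$ and its complement) is precisely the elementary derivation that citation points to, so there is nothing to contrast.
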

In view of this proposition, all complexity results on the (total variation) distance hold equally for the $L_1$-distance.

An LMC~$\M$ is called \emph{acyclic} if its transition graph is acyclic.
Equivalently, $\M$~is acyclic if for all $q \in Q$ we have that $\Pr_{\delta_q}$ has finite support, i.e., $\{w \in \Sigma^* : \Pr_{\delta_q}(w)>0\}$ is finite.

\section{The Threshold-Distance Problem} \label{sec-threshold-distance}

In \cite[Section 6]{LyngsoP02} (see also \cite[Theorem 7]{CortesMRdistance}), a reduction is given from the \emph{clique} decision problem to show that computing the distance in LMCs is NP-hard.
In that reduction the distance is rational and its bit size polynomial in the input.
It was shown in~\cite[Proposition~12]{14CK-LICS} that the distance~$d$ can be irrational.
Define the \emph{non-strict (resp.\ strict) threshold-distance} problem as follows:
Given an LMC, two initial distributions $\pi_1, \pi_2$, and a threshold $\tau \in [0,1] \cap \Q$,
 decide whether $d(\pi_1, \pi_2) \ge \tau$ (resp.\ $d(\pi_1, \pi_2) > \tau$).
In~\cite[Proposition 14]{14CK-LICS} it was shown that the non-strict threshold-distance problem is NP-hard with respect to Turing reductions.

In the following two subsections we consider the threshold-distance problem for general and acyclic LMCs, respectively.

\subsection{General LMCs} \label{sub-threshold-distance-words}
We show:
\begin{theorem} \label{thm-threshold-distance-words}
The strict threshold-distance problem is undecidable.
\end{theorem}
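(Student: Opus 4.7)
The plan is to reduce from the strict emptiness problem for probabilistic finite automata (PFAs): given a PFA~$\A$ over an alphabet~$\Sigma'$ and a rational threshold $\lambda \in (0,1)$, decide whether $\sup_w P_\A(w) > \lambda$; this is undecidable by a classical theorem of Paz. From $(\A,\lambda)$ I would construct, in polynomial time, an LMC $\M$, two initial distributions $\pi_1, \pi_2$, and a rational threshold $\tau$ so that $d(\pi_1,\pi_2) > \tau$ iff there is some $w$ with $P_\A(w) > \lambda$.

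A first attempt is to let the LMC generate a random word $w \in \Sigma'^*$ with a geometrically-decaying weight $\gamma(w)$, and then emit an acceptance letter $A$ (resp.\ rejection letter $R$): under $\pi_1$ the letter $A$ has conditional probability $P_\A(w)$ (realized by simulating $\A$ inside the state and emitting $A$ from accepting states), while under $\pi_2$ the letter $A$ has the constant probability~$\lambda$. Proposition~\ref{prop-tv-dist-basic} then gives $d(\pi_1,\pi_2) = \sum_{w \in \Sigma'^*} \gamma(w)\,|P_\A(w) - \lambda|$. But the positivity of this weighted $L_1$-quantity is equivalent to equivalence of $\pi_1$ and $\pi_2$, which is decidable in polynomial time by standard linear-algebra methods. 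So this naive encoding captures only the decidable two-sided question ``is $\A$ everywhere equal to $\lambda$?'', and not the one-sided quantifier we need.

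To break this symmetry, I would employ an \emph{amplification phase} that, after emitting~$w$, produces several independent acceptance flags via a product construction running $k$ synchronous copies of $\A$ on the same input (the LMC state space becomes $Q_\A^k$, which is polynomial for fixed $k$). Under $\pi_1$ the flags are $k$ i.i.d.\ Bernoulli$(P_\A(w))$ bits; under $\pi_2$ they are $k$ i.i.d.\ Bernoulli$(\lambda)$ bits. On the all-accept suffix $u = wA^k$, one has $\pi_1(u) - \pi_2(u) = \gamma(w)\,(P_\A(w)^k - \lambda^k)$, a quantity with the same sign as $P_\A(w) - \lambda$ regardless of~$w$. Summing only the positive contributions at such asymmetric witnesses, one would expect the distance to cross a prescribed rational $\tau$ exactly when some $w$ satisfies $P_\A(w) > \lambda$, with a Chernoff-style large-deviation bound used to control the ``noise'' coming from the remaining word-suffix patterns.

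The main obstacle is the quantitative calibration: because a polynomial-sized LMC can realize only finitely many product copies of~$\A$, the amplification strength is bounded, and the geometric weights~$\gamma$ must be chosen carefully to prevent cancellation between positive and negative contributions at different word-suffix patterns. Moreover, since $d$ is generically irrational, matching the strict inequality $d > \tau$ with the strict PFA inequality $P_\A(w) > \lambda$ demands a gap theorem guaranteeing that the ``yes'' and ``no'' instances of the PFA problem separate the value of~$d$ by a rational margin around~$\tau$. Constructing the amplified LMC and proving this gap are the technical core of the reduction.
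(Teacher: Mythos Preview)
Your diagnosis of the naive attempt is correct, but the amplification fix does not close the gap, and the ``gap theorem'' you call for does not exist. In your $k$-copy construction the distance is
\[
d(\pi_1,\pi_2)\;=\;\sum_{w\in\Sigma'^*}\gamma(w)\cdot d_{TV}\!\big(\mathrm{Bern}(P_\A(w))^{\otimes k},\,\mathrm{Bern}(\lambda)^{\otimes k}\big),
\]
a sum of \emph{nonnegative} terms, each depending only on the pair $(P_\A(w),\lambda)$. With $\lambda=\tfrac12$ (the classical Paz threshold) the $k$-fold Bernoulli TV distance is symmetric under $p\mapsto 1-p$, so swapping every $P_\A(w)$ to $1-P_\A(w)$ leaves $d$ unchanged while flipping the answer to the one-sided PFA question. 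More generally, for any $\lambda$, a ``no'' instance in which many $P_\A(w)$ sit just below~$\lambda$ can produce a distance arbitrarily close to (or exceeding) that of a ``yes'' instance in which a single $P_\A(w)$ sits just above~$\lambda$. Focusing on the single suffix pattern $A^k$ does not help: by Proposition~\ref{prop-tv-dist-basic} the distance is determined by the full partition $\{w:\pi_1(w)\ge\pi_2(w)\}$, not by any one slice you choose. Hence no computable threshold~$\tau$ can separate the two cases, and Chernoff bounds on the ``noise'' cannot rescue this---the obstruction is not noise but the intrinsic two-sidedness of the quantity you are comparing against~$\tau$.

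The missing idea, which the paper supplies, is to make the threshold itself the value of $d$ on a \emph{computable} candidate maximizer. Concretely, add a fresh letter~$b$ emitted only under~$\pi_1$ (with the same weight as~$\acc$), and let~$\pi_2$ simulate~$\A$ emitting $\acc$ from accepting states and a fresh~$\rej$ from rejecting states. For $L=\Sigma'^*\{b,\acc\}$ one has $\pi_1(L)=1$, and $\pi_2(L)=\pi_2(\Sigma'^*\acc)$ is a simple reachability probability, computable in polynomial time; set $\tau:=\pi_1(L)-\pi_2(L)$. Now the one-sidedness is built into the \emph{structure} of the optimizer rather than the magnitude of~$d$: if $P_\A(w)\le\tfrac12$ for all~$w$ then $L$ is exactly $\{w:\pi_1(w)\ge\pi_2(w)>0\}$ and $d=\tau$; if $P_\A(w)>\tfrac12$ for some~$w$ then removing $w\acc$ from~$L$ strictly increases the gap, so $d>\tau$. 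No amplification and no gap theorem are needed.
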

\begin{proof}
We reduce from the emptiness problem for probabilistic automata.
A \emph{probabilistic automaton} is a tuple $\A = (Q, \Sigma, M, \alpha, F)$ where
 $Q$ is a finite set of states,
 $\Sigma$ is a finite alphabet of labels,
 the mapping $M: \Sigma \to [0,1]^{Q \times Q}$, where $M(a)$ is a stochastic matrix for each $a \in \Sigma$, specifies the transitions,
 $\alpha \in [0,1]^Q$ is an initial distribution,
 and $F \subseteq Q$ is a set of accepting states.
Extend~$M$ to $M: \Sigma^* \to [0,1]^{Q \times Q}$ as in the case of LMCs.
In the case of a probabilistic automaton, $M(w)$ is a stochastic matrix for each $w \in \Sigma^*$.
For each $w \in \Sigma^*$ define $\Pr_{\A}(w) := \alpha M(w) \eta^\top$ where $\eta \in \{0,1\}^Q$ denotes the characteristic vector of~$F$.
The probability~$\Pr_{\A}(w)$ can be interpreted as the probability that $\A$ accepts~$w$, i.e., the probability that after inputting~$w$ the automaton~$\A$ is in an accepting state.
The \emph{emptiness problem} asks, given a probabilistic automaton~$\A$, whether there is a word $w \in \Sigma^*$ such that $\Pr_{\A}(w) > \frac12$.
This problem is known to be undecidable~\cite[p.~190, Theorem~6.17]{Paz71}.

In the following we assume $\Sigma = \{a_1, \ldots, a_k\}$.
Given a probabilistic automaton~$\A$ as above, construct an LMC $\M = (Q \cup \{q_1, q_\eow\}, \Sigma \cup \{b, \acc, \rej\}, \overline{M}, \delta_{q_\eow})$ such that $q_1, q_\eow$ are fresh states, and $b, \acc, \rej$ are fresh labels.
The transitions originating in the fresh states~$q_1, q_\eow$ are as follows:
\begin{center}
\begin{tikzpicture}[scale=2.5,LMC style]
\node[state] (q1) at (0,0) {$q_1$};
\node[state] (qeow) at (+1,0) {$q_\eow$};
\node (end) at (1.7,0) {};
\path[->] (q1) edge [loop,out=160,in=200,looseness=15] (q1);
\node at (-0.7,0.2) {$\frac{1}{2 k} a_1$};
\node at (-0.7,0.03) {$\vdots$};
\node at (-0.7,-0.2) {$\frac{1}{2 k} a_k$};
\path[->] (q1) edge [bend left] node[above] {$\frac14 b$} (qeow);
\path[->] (q1) edge [bend right] node[below] {$\frac14 \acc$} (qeow);
\path[->] (qeow) edge node[above] {$1 \eow$} (end);
\end{tikzpicture}
\end{center}
Here and in the subsequent pictures we use a convention that there be a state $q_\eow$ with $\eta(q_\eow) = 1$ and that $\eta(q) = 0$ hold for all other states.

Define $\pi_1 := \delta_{q_1}$.
Then for all $w \in \Sigma^*$ we have:
\begin{equation}
 \pi_1(w b) \ = \ \pi_1(w \acc) \ = \ \left(\frac1{2 k}\right)^{|w|} \cdot \frac14 \label{eq-strict-threshold-distance-1}
\end{equation}
The transitions originating in the states in~$Q$ are defined so that all $q \in Q$ emit each $a \in \Sigma$ with probability~$\frac1{2 k}$ (like~$q_1$).
For all $q \in F$ there is a transition to~$q_\eow$ labelled with $\frac12$ and~$\acc$;
for all $q \in Q \setminus F$ there is a transition to~$q_\eow$ labelled with $\frac12$ and~$\rej$:
\begin{center}
\begin{tikzpicture}[scale=2.5,LMC style]
\node[state,accepting] (qa) at (0,0) {};
\node[state] (qa1) at (1,0.3) {};
\node[state] (qa3) at (1,-0.3) {$q_\eow$};
\node (enda) at (1.7,-0.3) {};
\path[->] (qa) edge (qa1);
\node at (0.5,0.70) {$\frac{1}{2 k} a_1$};
\node at (0.5,0.53) {$\vdots$};
\node at (0.5,0.30) {$\frac{1}{2 k} a_k$};
\path[->] (qa) edge node[below] {$\frac12 \acc$} (qa3);
\path[->] (qa3) edge node[above] {$1 \eow$} (enda);

\node[state] (qr) at (2.5,0) {};
\node[state] (qr1) at (3.5,0.3) {};
\node[state] (qr3) at (3.5,-0.3) {$q_\eow$};
\node (endr) at (4.2,-0.3) {};
\path[->] (qr) edge (qr1);
\node at (3,0.70) {$\frac{1}{2 k} a_1$};
\node at (3,0.53) {$\vdots$};
\node at (3,0.30) {$\frac{1}{2 k} a_k$};
\path[->] (qr) edge node[below] {$\frac12 \rej$} (qr3);
\path[->] (qr3) edge node[above] {$1 \eow$} (endr);
\end{tikzpicture}
\end{center}
Formally, for $q, r \in Q$ and $a \in \Sigma$ set $\overline{M}(a)(q,r) := \frac1{2 k} M(a)(q,r)$.
For $q \in F$ set $\overline{M}(\acc)(q,q_\eow) := \frac12$, and
for $q \in Q \setminus F$ set $\overline{M}(\rej)(q,q_\eow) := \frac12$.
Define $\pi_2 := \alpha$ (in the natural way, i.e., with $\pi_2(q_1) = \pi_2(q_\eow) = 0$).
Then for all $w \in \Sigma^*$ we have:
\begin{equation} \label{eq-strict-threshold-distance-2}
\begin{aligned}
 \pi_2(w \acc) &\ = \ \left(\frac1{2 k}\right)^{|w|} \cdot \Pr\nolimits_{\A}(w) \cdot \frac12 \qquad \text{and} \\
 \pi_2(w \rej) &\ = \ \left(\frac1{2 k}\right)^{|w|} \cdot (1 - \Pr\nolimits_{\A}(w)) \cdot \frac12
\end{aligned}
\end{equation}
Consider $L := \Sigma^* \{b, \acc\}$.
We have $\pi_1(L) = 1$.
One can compute $\pi_2(L)$ in polynomial time by computing the probability of reaching a transition labelled by~$\acc$ (the label~$b$ is not reachable).
We claim that there is $w \in \Sigma^*$ with $\Pr_{\A}(w) > \frac12$ if and only if $d(\pi_1, \pi_2) > \pi_1(L) - \pi_2(L)$.
It remains to prove this claim.

Suppose there is no $w \in \Sigma^*$ with $\Pr_{\A}(w) > \frac12$.
Then, by \eqref{eq-strict-threshold-distance-1} and~\eqref{eq-strict-threshold-distance-2}, for all $w \in \Sigma^*$ we have $\pi_1(w \acc) \ge \pi_2(w \acc)$.
Hence:
\[
\left\{w \in (\Sigma \cup \{b, \acc, \rej\})^* : \pi_1(w) > 0,\ \pi_1(w) \ge \pi_2(w) \right\}
\quad = \quad L
\]
By Proposition~\ref{prop-tv-dist-basic} it follows $d(\pi_1, \pi_2) = \pi_1(L) - \pi_2(L)$.

Conversely, suppose there is $w \in \Sigma^*$ with $\Pr_{\A}(w) > \frac12$.
Consider $L' := L \setminus \{w \acc\}$.
We have:
\begin{align*}
d(\pi_1, \pi_2)
& \ \ge \ \pi_1(L') - \pi_2(L') && \text{Proposition~\ref{prop-tv-dist-basic}} \\
& \  =  \ \pi_1(L) - \pi_1(w \acc) - \pi_2(L) + \pi_2(w \acc) && \text{definition of~$L'$} \\
& \  =  \ \pi_1(L) - \pi_2(L) + \left(\frac1{2 k}\right)^{|w|} \cdot \left(\frac12 \Pr\nolimits_{\A}(w) - \frac14 \right) && \text{by \eqref{eq-strict-threshold-distance-1} and~\eqref{eq-strict-threshold-distance-2}} \\
& \  >  \ \pi_1(L) - \pi_2(L) && \Pr\nolimits_{\A}(w) > \frac12 \hfill \qedhere
\end{align*}
\end{proof}
Cortes, Mohri, and Rastogi~\cite{CortesMRdistance} conjectured ``that the problem of computing the [\ldots] distance [\ldots] is in fact undecidable'', see the discussion after the proof of~\cite[Theorem~7]{CortesMRdistance}.
Theorem~\ref{thm-threshold-distance-words} proves \emph{one interpretation} of that conjecture.
But the distance can be approximated with arbitrary precision, cf.\ Section~\ref{sec-approximation}, so the distance is ``computable'' in this sense.

In~\cite[Theorem 15]{14CK-LICS} it was shown that there is a polynomial-time many-one reduction from the square-root-sum problem to the non-strict threshold-distance problem for LMCs.
Decidability of the non-strict threshold-distance problem remains open.

\subsection{Acyclic LMCs} \label{sub-threshold-distance-acyclic}

It was shown in~\cite[Section 6]{LyngsoP02} and \cite[Proposition 14]{14CK-LICS} that the non-strict threshold-distance problem is NP-hard with respect to Turing reductions, even for acyclic LMCs.
We improve this result to PP-hardness:
\begin{restatable}{propositionS}{propthresholddistanceacyclic} \label{prop-threshold-distance-acyclic}
The non-strict and strict threshold-distance problems are PP-hard, even for acyclic LMCs and even with respect to many-one reductions.
\end{restatable}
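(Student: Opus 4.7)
My plan is to give a polynomial-time many-one reduction from MAJ-3-SAT, a canonical PP-complete problem, to the (non-strict and strict) threshold-distance problem for acyclic LMCs. Given a 3-CNF formula $\varphi$ in $n$ variables with $s$ satisfying assignments, I would construct, in polynomial time, an acyclic LMC~$\M$, two initial distributions $\pi_1, \pi_2$, and a rational threshold $\tau$ such that $d(\pi_1,\pi_2) \ge \tau$ (resp.\ $>\tau$) iff $s \ge 2^{n-1}$ (resp.\ $s > 2^{n-1}$).

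The construction follows the template of the proof of Theorem~\ref{thm-threshold-distance-words}, but replaces the probabilistic automaton with a gadget built directly from~$\varphi$. The distribution $\pi_1$ uniformly generates an assignment and always emits $\acc$, so that $\pi_1(w \acc) = 2^{-n}$ for each $w \in \{0,1\}^n$. The distribution $\pi_2$ is realized as a polynomial-size mixture over the clauses of~$\varphi$: for a randomly chosen clause $C_j$, a branch of the LMC generates~$w$ and emits $\acc$ or $\rej$ depending on whether $w \models C_j$. Each branch tracks only the $O(\log m + 3)$ bits needed for the clause index and the three variable bits of~$C_j$, so the overall LMC is polynomial in size and acyclic. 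By applying Proposition~\ref{prop-tv-dist-basic} and performing an absolute-value calculation analogous to the one in the last display of the proof of Theorem~\ref{thm-threshold-distance-words}, the distance $d(\pi_1,\pi_2)$ will take a closed form as an affine function $\alpha s + \beta$ of~$s$, where $\alpha > 0$ and $\alpha,\beta \in \Q$ are polynomial-time computable from~$\varphi$. Choosing $\tau$ to lie between the values of this function at $s = 2^{n-1}-1$ and $s = 2^{n-1}$ then yields the many-one reduction.

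The main technical obstacle is designing $\pi_2$ so that the resulting distance genuinely depends on the global count~$s$ of satisfying assignments rather than on merely per-clause statistics; a naive mixture yields a distance determined by~$m$ and the clause sizes alone—constant with respect to~$\varphi$'s combinatorial structure—and is therefore useless. I would address this by carefully tuning the per-clause probabilities of~$\pi_2$, together with auxiliary $\acc/\rej$ sub-branches and the relative weights in $\pi_1$, so that the per-clause contributions to $\tfrac12\sum_w|\pi_1(w)-\pi_2(w)|$ cancel and leave only a term proportional to~$s$. This balancing is the delicate core of the argument; once $d(\pi_1,\pi_2) = \alpha s + \beta$ has been established, the many-one reduction is immediate because~$\M$, $\pi_1, \pi_2$, and~$\tau$ are all polynomial-time computable from~$\varphi$, and the bit-size of~$\tau$ is polynomial in~$n$.
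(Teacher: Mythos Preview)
Your plan is to reduce directly from MAJ-SAT; the paper instead reduces from a decision version of \#NFA via Lemma~\ref{lem-non-strict-threshold-distance}, where $\pi_2$ simulates the NFA so that $\pi_2(w\,\acc)$ is proportional to the number of accepting runs on~$w$, and $\pi_1$ is tuned so that the sign of $\pi_1(w\,\rej)-\pi_2(w\,\rej)$ encodes membership in~$L(\A)$. Both routes aim at the same endpoint---an affine formula $d=\alpha\cdot(\text{count})+\beta$---so your direct approach is in principle viable.

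The difficulty is that the step you label the ``delicate core'' is the entire content of the reduction, and your stated intuition for it does not lead to a working construction. With your $\pi_2$ one has $\pi_2(w\,\acc)=2^{-n}c(w)/m$, where $c(w)$ is the number of clauses satisfied by~$w$. If $\pi_1(w\,\acc)=2^{-n}$ as you first specify, then $\pi_1(w\,\acc)-\pi_2(w\,\acc)\ge 0$ for every~$w$, the absolute values in Proposition~\ref{prop-tv-dist-basic} never interact with the conjunction of the clauses, and $d$ depends only on $\sum_w c(w)$, not on~$s$. No ``cancellation of per-clause contributions'' can repair this: any mixture over clauses makes $\pi_2(w\,\acc)$ an affine function of the clause indicators $[w\models C_j]$, and such a function cannot express their AND. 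What does work is to choose $\pi_1(w\,\acc)$ to be a constant strictly between $2^{-n}(m-1)/m$ and $2^{-n}$---for instance $2^{-n}\bigl(1-\tfrac{1}{2m}\bigr)$---so that the sign of $\pi_1(w\,\acc)-\pi_2(w\,\acc)$ flips exactly when $c(w)=m$, i.e., when $w\models\varphi$. A direct computation then gives $d(\pi_1,\pi_2)=\tfrac{s}{m\,2^n}+\beta$ with $\beta$ polynomial-time computable from~$\varphi$, and the many-one reduction follows. This sign-flip mechanism, not cancellation, is also what drives the paper's \#NFA construction (there the flip distinguishes $\nacc(w)\ge 1$ from $\nacc(w)=0$). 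Until you pin down this threshold trick, the proposal remains a plan whose hard step is unproved.
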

\noindent The proof uses the connection between PP and \#P.
Consider the problem \#NFA, which is defined as follows: given a nondeterministic finite automaton (NFA)~$\A$ over alphabet~$\Sigma$, and a number $n \in \N$ in unary, compute $|L(\A) \cap \Sigma^n|$, i.e., the number of accepted words of length~$n$.
The problem \#NFA is \#P-complete~\cite{KSM-counting-strings}.
The following lemma forms the core of the proof of Proposition~\ref{prop-threshold-distance-acyclic}:
\begin{lemma} \label{lem-non-strict-threshold-distance}
Given an NFA $\A = (Q, \Sigma, \delta, q^{(1)}, F)$ and a number $n \in \N$ in unary, one can compute in polynomial time an acyclic LMC~$\M$ and initial distributions $\pi_1, \pi_2$ and a rational number~$y$ such that
\[
 d(\pi_1, \pi_2) \ = \ y + \frac{|\Sigma^n \setminus L(\A)|}{|\Sigma|^n |Q|^n} \,.
\]
\end{lemma}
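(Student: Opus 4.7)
The plan is to construct an acyclic LMC whose runs output words of the form $a_1\bullet a_2\bullet\cdots a_n\bullet X$ with $X \in \{\acc, \rej, \#\}$, together with two initial distributions $\pi_1,\pi_2$ tailored so that the positive-part formula from Proposition~\ref{prop-tv-dist-basic} extracts precisely $|\Sigma^n \setminus L(\A)|$ from the $\acc$-emissions, leaving a polynomial-time-computable offset on the remaining emissions.

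I would first preprocess $\A$ by adding a non-accepting absorbing trap state, which enforces $\delta(q,a)\neq\emptyset$ everywhere without changing $L(\A)$; I continue to write $|Q|$ for the enlarged state count. The LMC has layered states $(q,i)$ for $q\in Q$ and $i\in\{0,\dots,n\}$, plus a second, chain-like layered substructure used only by $\pi_1$. At each letter position $i<n$, an active state emits a uniform $a\in\Sigma$ with probability $1/|\Sigma|$ and then emits a neutral marker $\bullet$ through $|Q|$ parallel transitions of probability $1/|Q|$ each; in the $\A$-substructure each of these transitions corresponds to a guessed successor $q'\in Q$ and goes to $(q',i+1)$ if $q'\in\delta(q,a)$, and to a dead sink otherwise, whereas in the chain substructure one fixed transition advances to the next chain state and the remaining $|Q|-1$ divert into the dead sink. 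Terminal transitions emit $\acc$ from an accepting NFA state, $\rej$ from a valid non-accepting state, and $\#$ from the sink (and also from the chain's terminal when it is reached via a failed step). I take $\pi_1$ Dirac on the chain start and $\pi_2$ Dirac on $(q^{(1)},0)$; the LMC is acyclic and of polynomial size.

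A direct computation gives, for every $w\in\Sigma^n$, $\pi_1(w\bullet\acc)=1/(|\Sigma||Q|)^n$, $\pi_1(w\bullet\rej)=0$, $\pi_1(w\bullet\#)=(|Q|^n-1)/(|\Sigma||Q|)^n$, while $\pi_2(w\bullet\acc)=r_w/(|\Sigma||Q|)^n$, $\pi_2(w\bullet\rej)=v_w/(|\Sigma||Q|)^n$, $\pi_2(w\bullet\#)=(|Q|^n-V_w)/(|\Sigma||Q|)^n$, where $r_w,v_w$ and $V_w=r_w+v_w$ count the accepting, valid non-accepting, and all valid length-$n$ runs of $\A$ on $w$, respectively. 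Applying Proposition~\ref{prop-tv-dist-basic} term by term: the $\acc$-contributions sum to $\sum_w \max((1-r_w)/(|\Sigma||Q|)^n,0)$, and since $r_w$ is a non-negative integer the positive part is $1$ exactly when $w\notin L(\A)$, giving $|\Sigma^n\setminus L(\A)|/(|\Sigma||Q|)^n$; the $\rej$-contributions vanish; and the $\#$-contributions sum to $(V-|\Sigma|^n)/(|\Sigma||Q|)^n$ (using $V_w\ge 1$ from completeness), where $V=\sum_w V_w$ is polynomial-time computable via $n$-fold exponentiation of the integer matrix $B(q,q'):=|\{a\in\Sigma:q'\in\delta(q,a)\}|$. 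Taking $y:=(V-|\Sigma|^n)/(|\Sigma||Q|)^n$ then yields the claimed identity.

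The main obstacle in this plan is keeping the $\#$-residual polynomial-time computable: without the preprocessing, the sum $\sum_w\max(V_w-1,0)$ would involve $|\{w:V_w\ge 1\}|$, which is itself \#NFA-hard. Completing $\A$ with the trap state is precisely what forces $V_w \ge 1$ for every $w$ and dissolves this obstruction, leaving only aggregate quantities (like $V$) that are accessible through linear algebra on the NFA's transition-count matrix.
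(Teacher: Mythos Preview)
Your construction is correct and shares the paper's core architecture: a layered copy of the NFA in which, at each step, a letter is emitted uniformly and a successor is guessed uniformly among all $|Q|$ states, with invalid guesses diverted to a ``sink'' track; $\pi_2$ starts this simulation while $\pi_1$ runs a parallel chain. The differences are in the terminal labeling and hence in how the offset~$y$ arises. The paper gives the chain a \emph{fresh} label~$b$ (rather than overloading~$\acc$) and lets the sink share the label~$\rej$ with non-accepting NFA states; this makes the maximizing event $L$ equal to $\Sigma^n\{b,\rej\}$ minus $\overline{L(\A)}\{\rej\}$, so $y=\pi_1(\Sigma^n\{b,\rej\})-\pi_2(\Sigma^n\{b,\rej\})$ is simply a reachability probability in the constructed LMC, computable without any preprocessing of~$\A$. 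Your choice to reuse~$\acc$ for the chain and to split off a separate~$\#$ for the sink is what forces you to complete~$\A$ with a trap state (otherwise $\sum_w\max(V_w-1,0)$ would hide another \#NFA instance, exactly as you observed) and to compute~$y$ via an integer matrix power on the NFA rather than inside the LMC. Both routes work; the paper's labeling buys a slightly cleaner offset and keeps the original~$|Q|$ in the denominator, whereas yours yields the formula with the enlarged state count --- harmless for the downstream hardness proofs, but a small deviation from the lemma as literally stated. One presentational point: your ``dead sink'' cannot actually be absorbing; for your probability formulas to hold it must be layered (like the paper's states~$r_i$) and keep emitting uniform letters and~$\bullet$'s until level~$n$.
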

\begin{proof}
In the following we assume $Q = \{q^{(1)}, \ldots, q^{(s)}\}$ and $\Sigma = \{a_1, \ldots, a_k\}$.
Construct the acyclic LMC $\M = (Q', \Sigma \cup \{b, \acc, \rej\}, M)$ such that
\[
 Q' \ = \ \{p_0, p_1, \ldots, p_n, q_\eow\} \ \cup \ \{ q_i^{(j)} : 0 \le i \le n,\ 1 \le j \le s \}
     \ \cup \ \{ r_i : 0 \le i \le n \}
\]
and $b, \acc, \rej$ are fresh labels.
The transitions and end-of-word probabilities originating in the states $p_0, \ldots, p_{n}, q_\eow$ are as follows:
\begin{center}
\begin{tikzpicture}[scale=2.5,LMC style]
\node[state] (p0) at (0,0) {$p_0$};
\node[state] (p1) at (1,0) {$p_1$};
\node[state,draw=none] (pd) at (2,0) {$\ldots$};
\node[state] (pn) at (3,0) {$p_n$};
\node[state] (qeow) at (+4,0) {$q_\eow$};
\node (end) at (4.7,0) {};
\path[->] (p0) edge (p1);
\path[->] (p1) edge (pd);
\path[->] (pd) edge (pn);
\path[->] (pn) edge [bend left] node[above] {$\frac1{s^n} b$} (qeow);
\path[->] (pn) edge [bend right] node[below] {$(1 - \frac1{s^n}) \rej$} (qeow);
\path[->] (qeow) edge node[above] {$1 \eow$} (end);
\node at (0.5,0.5) {$\frac{1}{k} a_1$};
\node at (0.5,0.34) {$\vdots$};
\node at (0.5,0.12) {$\frac{1}{k} a_k$};
\node at (1.5,0.5) {$\frac{1}{k} a_1$};
\node at (1.5,0.34) {$\vdots$};
\node at (1.5,0.12) {$\frac{1}{k} a_k$};
\node at (2.5,0.5) {$\frac{1}{k} a_1$};
\node at (2.5,0.34) {$\vdots$};
\node at (2.5,0.12) {$\frac{1}{k} a_k$};
\end{tikzpicture}
\end{center}
Define $\pi_1 := \delta_{p_0}$.
Then for all $w \in \Sigma^n$ we have:
\begin{align}
 \pi_1(w b) \ &= \ \frac1{k^n} \cdot \frac1{s^n} \label{eq-non-strict-threshold-distance-p1-b} \\
 \pi_1(w \rej) \ &= \ \frac1{k^n} \cdot \Big(1 - \frac{1}{s^n}\Big) \label{eq-non-strict-threshold-distance-p1-rej}
\end{align}
The transitions originating in the states $q_i^{(j)}, r_i$ are as follows.
For each $a \in \Sigma$ and each $i \in \{0, \ldots, n-1\}$ set:
\begin{align*}
M(a)\big(q_i^{(j)}, q_{i+1}^{(j')}\big) & \ := \ \frac{1}{k} \cdot \frac{1}{s} && \forall\, j \in \{1, \ldots, s\} \quad \forall\,q^{(j')} \in \delta(q^{(j)},a)\\
M(a)\big(q_i^{(j)}, r_{i+1}\big)        & \ := \ \frac{1}{k} \cdot \Big( 1 - \frac{|\delta(q^{(j)},a)|}{s} \Big) && \forall\, j \in \{1, \ldots, s\} \\
M(a)\big(r_i, r_{i+1}\big)              & \ := \ \frac{1}{k}
\end{align*}
Observe that if $i \in \{0, \ldots, n-1\}$ then $r_i$ and all $q_i^{(j)}$ emit each $a \in \Sigma$ with probability~$1/k$.
For each $q^{(j)} \in F$ set $M(\acc)(q_n^{(j)}, q_\eow) := 1$.
For each $q^{(j)} \not\in F$ set $M(\rej)(q_n^{(j)}, q_\eow) := 1$.
Finally, set $M(\rej)(r_n, q_\eow) := 1$.
\begin{example}
\it
We illustrate this construction with the following NFA~$\A$ over $\Sigma = \{a_1, a_2\}$:
\begin{center}
\begin{tikzpicture}[scale=2.5,LMC style]
\node[state] (q1) at (0,0) {$q^{(1)}$};
\node[state,accepting] (q2) at (0,-0.6) {$q^{(2)}$};
\path[->] (q1) edge [loop,out=20,in=-20,looseness=13] node[right] {$a_1,a_2$} (q1);
\path[->] (q1) edge node[right] {$a_2$} (q2);
\path[->] (q2) edge [loop,out=20,in=-20,looseness=13] node[right] {$a_2$} (q2);
\end{tikzpicture}
\end{center}
For $n=3$ we obtain the following transitions: 
\begin{center}
\begin{tikzpicture}[scale=2.5,LMC style]
\node[state] (q01) at (0,0)  {$q_0^{(1)}$};
\node[state] (q02) at (0,-1) {$q_0^{(2)}$};
\node[state] (r0)  at (0,-2) {$r_0$};
\node[state] (q11) at (1,0)  {$q_1^{(1)}$};
\node[state] (q12) at (1,-1) {$q_1^{(2)}$};
\node[state] (r1)  at (1,-2) {$r_1$};
\node[state] (q21) at (2,0)  {$q_2^{(1)}$};
\node[state] (q22) at (2,-1) {$q_2^{(2)}$};
\node[state] (r2)  at (2,-2) {$r_2$};
\node[state] (q31) at (3,0)  {$q_3^{(1)}$};
\node[state] (q32) at (3,-1) {$q_3^{(2)}$};
\node[state] (r3)  at (3,-2) {$r_3$};
\node[state] (qeow) at (3.9,-1) {$q_\eow$};
\node (end) at (4.7,-1) {};
\path[->] (qeow) edge node[above] {$1 \eow$} (end);
\path[->] (q01) edge node[above] {$\frac14 a_1,\frac14 a_2$} (q11);
\path[->] (q01) edge node[above,yshift=6] {$\frac14 a_2$} (q12);
\path[->] (q01) edge[bend left=13] node[left,pos=0.25] {$\frac14 a_1$} (r1);
\path[->] (q02) edge node[above,pos=0.25] {$\frac14 a_2$} (q12);
\path[->] (q02) edge node[below,xshift=-5,sloped] {$\frac14 a_2, \frac12 a_1$} (r1);
\path[->] (r0)  edge node[below] {$\frac12 a_1, \frac12 a_2$} (r1);
\path[->] (q11) edge node[above] {$\frac14 a_1,\frac14 a_2$} (q21);
\path[->] (q11) edge node[above,yshift=6] {$\frac14 a_2$} (q22);
\path[->] (q11) edge[bend left=13] node[left,pos=0.25] {$\frac14 a_1$} (r2);
\path[->] (q12) edge node[above,pos=0.25] {$\frac14 a_2$} (q22);
\path[->] (q12) edge node[below,xshift=-5,sloped] {$\frac14 a_2, \frac12 a_1$} (r2);
\path[->] (r1)  edge node[below] {$\frac12 a_1, \frac12 a_2$} (r2);
\path[->] (q21) edge node[above] {$\frac14 a_1,\frac14 a_2$} (q31);
\path[->] (q21) edge node[above,yshift=6] {$\frac14 a_2$} (q32);
\path[->] (q21) edge[bend left=13] node[left,pos=0.25] {$\frac14 a_1$} (r3);
\path[->] (q22) edge node[above,pos=0.25] {$\frac14 a_2$} (q32);
\path[->] (q22) edge node[below,xshift=-5,sloped] {$\frac14 a_2, \frac12 a_1$} (r3);
\path[->] (r2)  edge node[below] {$\frac12 a_1, \frac12 a_2$} (r3);
\path[->] (q31) edge node[above,xshift=5] {$1 \rej$} (qeow);
\path[->] (r3)  edge node[below,xshift=5] {$1 \rej$} (qeow);
\path[->] (q32) edge node[above] {$1 \acc$} (qeow);
\end{tikzpicture}
\end{center}
\end{example}
Define $\pi_2 := \delta_{q_0^{(1)}}$.
For all $w \in \Sigma^*$ write $\nacc(w)$ for the number of accepting $w$-labelled runs of the automaton~$\A$, i.e., the number of $w$-labelled paths from $q^{(1)}$ to a state in~$F$.
For all $w \in \Sigma^n$ we have:
\begin{align}
 \pi_2(w \acc) \ &= \ \frac{1}{k^n} \cdot \frac{\nacc(w)}{s^n} \label{eq-non-strict-threshold-distance-p2-b} \\
 \pi_2(w \rej) \ &= \ \frac1{k^n} \cdot \Big(1 - \frac{\nacc(w)}{s^n}\Big) \label{eq-non-strict-threshold-distance-p2-rej}
\end{align}
Define $B := \Sigma^n \{b, \rej\}$.
By \eqref{eq-non-strict-threshold-distance-p1-b},~\eqref{eq-non-strict-threshold-distance-p1-rej} we have $\pi_1(B) = 1$.
One can compute~$\pi_2(B)$ in polynomial time by computing the probability of reaching a transition labelled by~$\rej$ (the label~$b$ is not reachable).
Set $y := \pi_1(B) - \pi_2(B)$.

It follows from Proposition~\ref{prop-tv-dist-basic} that $d(\pi_1, \pi_2) = \pi_1(L) - \pi_2(L)$ holds for
\newcommand{\Lcomp}{\overline{L(\A)}}%
\begin{align*}
L &\ := \ \{ w \in (\Sigma \cup \{b,\acc,\rej\})^* : 0 < \pi_1(w) \ge \pi_2(w) \}\,.
\intertext{Observe that $L(\A) = \{w \in \Sigma^n : \nacc(w) \ge 1\}$.
Hence it follows with \eqref{eq-non-strict-threshold-distance-p1-b},~\eqref{eq-non-strict-threshold-distance-p1-rej},~\eqref{eq-non-strict-threshold-distance-p2-rej}:}
L &\ = \  \Sigma^n \{b\} \; \cup \; (\Sigma^n \cap L(\A)) \{\rej\}\
\intertext{Defining $\Lcomp := \Sigma^n \setminus L(\A)$ we can write:}
L &\ = \  B \setminus \big( \Lcomp \{\rej\} \big)
\end{align*}
Thus we have:
\begin{align*}
d(\pi_1, \pi_2)
\ &= \ \pi_1\Big( B \setminus \big( \Lcomp \{\rej\} \big) \Big) - \pi_2\Big( B \setminus \big( \Lcomp \{\rej\} \big) \Big) && \text{as argued above} \\
\ &= \ y + \pi_2\big( \Lcomp \{\rej\} \big) - \pi_1\big( \Lcomp \{\rej\} \big) && \text{definition of~$y$}
\intertext{Observe that $\Lcomp = \{w \in \Sigma^n : \nacc(w) = 0\}$. Hence we can continue:}
\ &= \ y + \frac{\big|\Lcomp\big|}{k^n} - \frac{\big|\Lcomp\big|}{k^n} \cdot \Big(1 - \frac1{s^n}\Big) && \text{by \eqref{eq-non-strict-threshold-distance-p2-rej},~\eqref{eq-non-strict-threshold-distance-p1-rej}} \\
\ &= \ y + \frac{\big|\Lcomp\big|}{k^n s^n} \ = \ y + \frac{|\Sigma^n \setminus L(\A)|}{|\Sigma|^n |Q|^n} && \text{definitions} \hfill \qedhere
\end{align*}
\end{proof}
The PP lower bound from Proposition~\ref{prop-threshold-distance-acyclic} is tight for acyclic LMCs:
\begin{restatable}{theorem}{thmthresholddistanceacyclic} \label{thm-threshold-distance-acyclic}
The non-strict and strict threshold-distance problems are PP-complete for acyclic LMCs.
\end{restatable}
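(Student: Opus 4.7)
The PP lower bound is given by Proposition~\ref{prop-threshold-distance-acyclic}, so it remains to show that both the non-strict and strict threshold-distance problems lie in PP for acyclic LMCs. The plan is to express (a suitable integer scaling of) $d(\pi_1,\pi_2)$ as a \#P function of the input and then invoke the standard fact that comparing a \#P function with a polynomial-time computable threshold is in PP.

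First I would use acyclicity to bound word lengths: if $\M = (Q,\Sigma,M,\eta)$ is acyclic with $n := |Q|$, every $w$ with $\pi_i(w) > 0$ satisfies $|w| \le n$, so Proposition~\ref{prop-tv-dist-basic} gives
\[
 d(\pi_1,\pi_2) \ = \ \sum_{w \in \Sigma^{\le n}} \max\{0,\ \pi_1(w) - \pi_2(w)\}.
\]
Choosing a common denominator $D$ of polynomial bit-size for the rationals appearing in $\pi_1, \pi_2, \eta$ and the matrices $M(a)$, and setting $N := D^{n+1}$, each $N\pi_i(w)$ is a non-negative integer of polynomial bit-size, computable in polynomial time by $n$ matrix-vector multiplications. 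Consequently
\[
 N \cdot d(\pi_1,\pi_2) \ = \ \sum_{w \in \Sigma^{\le n}} \max\{0,\ N\pi_1(w) - N\pi_2(w)\}
\]
equals the number of pairs $(w,i)$ with $w \in \Sigma^{\le n}$ and $i \in \Z$ satisfying $N\pi_2(w) \le i < N\pi_1(w)$.

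By a standard padding argument (e.g., representing each $w \in \Sigma^{\le n}$ by the length-$(n{+}1)$ string $w \bot^{n+1-|w|}$ over $\Sigma \cup \{\bot\}$), such pairs can be encoded as fixed-length bit strings and the membership condition verified in polynomial time, so $N \cdot d(\pi_1,\pi_2)$, viewed as a function of the input $(\M,\pi_1,\pi_2)$, is in \#P. The non-strict (resp.\ strict) threshold-distance problem therefore amounts to deciding whether this \#P function is at least (resp.\ greater than) the polynomial-time computable integer $N\tau$. A textbook probabilistic construction places this in PP: a machine that with probability $\frac12$ simulates the \#P witness and with probability $\frac12$ flips coins to accept with a probability calibrated by the threshold can be arranged so that its overall acceptance probability exceeds $\frac12$ exactly when the required inequality holds, and the two variants are interchangeable since the scaled quantities are integers (so $\ge N\tau$ is equivalent to $> N\tau - 1$) and PP is closed under complement. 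I expect no deep obstacle in this proof; the care lies in fixing the encoding so that each $w$ is counted exactly once and in tracking that the denominator $N$ and all intermediate quantities remain of polynomial bit-size, both of which are immediate from acyclicity and the rational representation of the input.
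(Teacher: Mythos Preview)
Your proposal is correct and follows essentially the same approach as the paper: both arguments scale all probabilities by a power of the common denominator~$D$ so that $D^{n+2} d(\pi_1,\pi_2)$ (or your $N\,d(\pi_1,\pi_2)$) becomes an integer expressible as a polynomially verifiable count over words $w\in\Sigma^{\le n}$, and then compare this count with the scaled threshold via a standard PP construction. The only cosmetic differences are that the paper builds the nondeterministic Turing machine explicitly (branching into $D^{n-|w|}\lvert(\pi_1'-\pi_2')M'(w)(\eta')^\top\rvert$ accepting computations per guessed word, against $2D^{n+2}\tau+1$ rejecting ones) rather than invoking \#P as an intermediate abstraction, and it uses the $\tfrac12\sum_w|\pi_1(w)-\pi_2(w)|$ form of Proposition~\ref{prop-tv-dist-basic} instead of your $\sum_w\max\{0,\pi_1(w)-\pi_2(w)\}$; also note that in an acyclic LMC the longest positive-probability word has length at most $|Q|-1$, so your choice $N=D^{n+1}$ is indeed enough to make each $N\pi_i(w)$ integral.
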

\begin{remarkS} \label{rem-dk}
The works \cite{LyngsoP02,CortesMRdistance} also consider the $L_k$-distances for integers~$k$:
\[
 d_k(\pi_1, \pi_2) \ := \ \sum_{w \in \Sigma^*} | \pi_1(w) - \pi_2(w) |^k
\]
For any fixed even~$k$ one can compute~$d_k$ in polynomial time, see, e.g., \cite[Theorem 6]{CortesMRdistance}.
In contrast, it is NP-hard to compute or even approximate~$d_k$ for any odd~$k$ \cite[Theorems 7 and 10]{CortesMRdistance}.
Our PP- and \#P-hardness results (Proposition~\ref{prop-threshold-distance-acyclic} and Theorem~\ref{thm-approx-distance}) hold for~$d_1$ (due to Proposition~\ref{prop-tv-dist-basic}) but the reductions do not apply in an obvious way to~$d_k$ for any $k \ge 2$.
However, the argument in the proof of Theorem~\ref{thm-threshold-distance-acyclic} for the PP upper bound does generalize to all~$d_k$, see \iftechrep{Appendix~\ref{app-thm-threshold-distance-acyclic}}{\cite{18K:ICALP-TR}}.
\end{remarkS}

\section{Approximation} \label{sec-approximation}

As the strict threshold-distance problem is undecidable (Theorem~\ref{thm-threshold-distance-words}), 
one may ask whether the distance can be approximated.
It is not hard to see that the answer is yes.
In fact, it was shown in~\cite[Corollary~8]{14CK-LICS} that the distance can be approximated within an arbitrary additive error even for \emph{infinite-word} LMCs, but no complexity bounds were given.
In this section we provide bounds on the complexity of approximating the distance for (finite-word) LMCs.

\subsection{Hardness} \label{sub-approximation-hardness}

Lemma~\ref{lem-non-strict-threshold-distance} implies hardness of approximating the distance:
\begin{theorem} \label{thm-approx-distance}
Given an LMC and initial distributions $\pi_1, \pi_2$ and an error bound $\varepsilon > 0$ in binary, it is \#P-hard to  compute a number~$x$ with $|d(\pi_1, \pi_2) - x| \le \varepsilon$, even for acyclic LMCs.
\end{theorem}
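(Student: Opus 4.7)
The plan is to reduce \#NFA (which is \#P-complete) to the approximation problem. Given an NFA $\A = (Q, \Sigma, \delta, q^{(1)}, F)$ and $n \in \N$ in unary, apply Lemma~\ref{lem-non-strict-threshold-distance} to obtain, in polynomial time, an acyclic LMC~$\M$, initial distributions $\pi_1, \pi_2$, and a rational~$y$ such that
\[
 d(\pi_1, \pi_2) \ = \ y + \frac{|\Sigma^n \setminus L(\A)|}{|\Sigma|^n |Q|^n} \,.
\]
Since $|\Sigma^n \setminus L(\A)| = |\Sigma|^n - |\Sigma^n \cap L(\A)|$, an exact value for $d(\pi_1, \pi_2)$ would immediately yield the \#NFA answer. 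It therefore suffices to choose $\varepsilon$ small enough that an $\varepsilon$-approximation determines $|\Sigma^n \setminus L(\A)|$ uniquely.

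The key observation is that the unknown integer $|\Sigma^n \setminus L(\A)|$ enters the formula with the known denominator $|\Sigma|^n |Q|^n$. I would set
\[
 \varepsilon \ := \ \frac{1}{3 \, |\Sigma|^n |Q|^n} \,.
\]
Because $n$ is given in unary and $|\Sigma|, |Q|$ are polynomial in the input, $|\Sigma|^n |Q|^n$ has a polynomial-size binary representation, so $\varepsilon$ can be written down in binary in polynomial time. Given any $x$ with $|d(\pi_1,\pi_2) - x| \le \varepsilon$, the number $(x - y) \cdot |\Sigma|^n |Q|^n$ lies within $\frac{1}{3}$ of the integer $|\Sigma^n \setminus L(\A)|$, so rounding to the nearest integer recovers $|\Sigma^n \setminus L(\A)|$, and hence $|\Sigma^n \cap L(\A)|$, exactly.

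The whole reduction is polynomial-time: constructing $(\M, \pi_1, \pi_2, y)$ is polynomial by Lemma~\ref{lem-non-strict-threshold-distance}, writing $\varepsilon$ in binary is polynomial since $n$ is in unary, and the post-processing (subtracting $y$, multiplying by $|\Sigma|^n|Q|^n$, rounding, and subtracting from $|\Sigma|^n$) is polynomial arithmetic on numbers of polynomial bit length. This establishes \#P-hardness of the approximation problem, even restricted to acyclic LMCs.

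The main obstacle — and the reason the unary encoding of~$n$ in Lemma~\ref{lem-non-strict-threshold-distance} matters here — is ensuring that the threshold $\varepsilon$ needed to separate consecutive admissible values of $d(\pi_1,\pi_2)$ is representable in polynomially many bits. If $n$ were given in binary, $|\Sigma|^n|Q|^n$ would be doubly exponential and $\varepsilon$ would require exponentially many bits even in binary, breaking the polynomial-time reduction. With $n$ in unary, the gap $1/(|\Sigma|^n|Q|^n)$ is merely exponentially small, hence has a polynomial-size binary numerator/denominator, which is exactly what the problem statement allows.
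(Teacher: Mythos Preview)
Your proposal is correct and matches the paper's proof essentially step for step: both reduce from \#NFA via Lemma~\ref{lem-non-strict-threshold-distance}, choose $\varepsilon = 1/(3|\Sigma|^n|Q|^n)$, and recover the integer count by rounding. Your added remarks on why $\varepsilon$ has polynomial bit size (because $n$ is in unary) make explicit a point the paper leaves implicit.
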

\begin{proof}
Recall that the problem \#NFA is \#P-complete~\cite{KSM-counting-strings}.
Let $\A$ be the given NFA and $n \in \N$.
Let $\M, \pi_1, \pi_2, y$ be as in Lemma~\ref{lem-non-strict-threshold-distance}.
Approximate $d(\pi_1, \pi_2)$ within $1/(3 |\Sigma|^n |Q|^n)$ and call the approximation~$\tilde d$.
It follows from Lemma~\ref{lem-non-strict-threshold-distance} that $|L(\A) \cap \Sigma^n|$ is the unique integer~$u$ with
\[
 \left| y + \frac{|\Sigma|^n - u}{|\Sigma|^n |Q|^n} - \tilde{d} \right| \ \le \  \frac{1}{3 |\Sigma|^n |Q|^n}\,.
\]
Such~$u$ can be computed in polynomial time.
\end{proof}
Theorem~\ref{thm-approx-distance} improves the NP-hardness result of~\cite[Proposition 9]{14CK-LICS}.
In fact, PP and~\#P are substantially harder than~NP:
By Toda's theorem~\cite{To91}, the polynomial-time hierarchy (PH) is contained in $\mathrm{P^{PP} = P^{\#P}}$.
Therefore, any problem in PH can be decided in deterministic polynomial time with the help of an oracle for the threshold-distance problem or for approximating the distance.

\subsection{Acyclic LMCs} \label{sub-approximation-acyclic}

Towards approximation algorithms, define $W_2 := \{w \in \Sigma^* : \pi_1(w) \ge \pi_2(w) \}$ and $W_1 := \{w \in \Sigma^* : \pi_1(w) < \pi_2(w) \}$.
By Proposition~\ref{prop-tv-dist-basic} we have:
\begin{align}
d(\pi_1, \pi_2) \ & = \ \pi_1(W_2) - \pi_2(W_2) \ = \ 1 - \pi_1(W_1) - \pi_2(W_2) \label{eq-W1-W2}
\end{align}
Therefore, to approximate $d(\pi_1,\pi_2)$ it suffices to approximate $\pi_i(W_i)$.
A simple sampling scheme leads to the following theorem:
\begin{theorem} \label{thm-approx-sampling}
There is a randomized algorithm, $R$, that, given an acyclic LMC~$\M$ and initial distributions $\pi_1, \pi_2$ and an error bound~$\varepsilon>0$ and an error probability $\delta \in (0,1)$, does the following:
\begin{itemize}
\item $R$ computes, with probability at least~$1 - \delta$, a number~$x$ with $|d(\pi_1,\pi_2) - x| \le \varepsilon$;
\item $R$ runs in time polynomial in $\frac{\log \delta}{\varepsilon}$ and in the encoding size of $\M$ and~$\pi_1,\pi_2$.
\end{itemize}
\end{theorem}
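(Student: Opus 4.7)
The plan is to use a direct Monte Carlo scheme based on equation~\eqref{eq-W1-W2}, which gives $d(\pi_1,\pi_2) = 1 - \pi_1(W_1) - \pi_2(W_2)$. It thus suffices to approximate each $\pi_i(W_i)$ separately within additive error $\varepsilon/2$ with confidence $1-\delta/2$ and combine via a union bound.

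The key preliminary observation I would exploit is acyclicity: every word $w$ with $\pi_i(w) > 0$ has length at most $|Q|$, since no state can be revisited on a positive-probability path. Sampling a word according to $\pi_i$ can therefore be carried out in polynomial time by walking the chain step by step from a state drawn according to $\pi_i$, at each step picking either an outgoing transition (with the corresponding label) or the end-of-word option according to the local distribution; this halts within $|Q|$ steps. Moreover, once a word $w$ has been produced, both $\pi_1(w) = \pi_1 M(w) \eta^\top$ and $\pi_2(w) = \pi_2 M(w) \eta^\top$ are rationals of polynomial bit length and can be computed exactly in polynomial time, so membership of $w$ in $W_1$ or $W_2$ can be decided exactly by comparing these two numbers.

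The algorithm $R$ that I propose is then: set $N := \lceil (2/\varepsilon^2) \ln(4/\delta) \rceil$; draw $N$ independent samples from $\pi_1$ and let $\hat p_1$ be the empirical frequency of samples that land in $W_1$; draw $N$ independent samples from $\pi_2$ and let $\hat p_2$ be the empirical frequency of samples that land in $W_2$; return $x := 1 - \hat p_1 - \hat p_2$. By Hoeffding's inequality applied to the $\{0,1\}$-valued indicator variables $\mathbf{1}[w \in W_i]$ under $\pi_i$, the choice of $N$ guarantees $|\hat p_i - \pi_i(W_i)| \le \varepsilon/2$ with probability at least $1 - \delta/2$ for each $i \in \{1,2\}$. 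A union bound then yields $|x - d(\pi_1,\pi_2)| \le \varepsilon$ with probability at least $1 - \delta$. The total running time is $N$ times the cost of a single sample-and-test, hence polynomial in $1/\varepsilon$, in $\log(1/\delta)$, and in the encoding size of $\M, \pi_1, \pi_2$.

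I do not expect any real obstacle in this proof: acyclicity simultaneously bounds sample-path length by $|Q|$ (so sampling is efficient) and guarantees that the membership test for $W_i$ can be performed exactly on rationals of polynomial bit length (so there is no numerical subtlety). The only thing to verify carefully is the routine concentration bound and union-bound argument sketched above.
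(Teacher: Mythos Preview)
Your proposal is correct and essentially identical to the paper's own proof: both use equation~\eqref{eq-W1-W2} to reduce to estimating $\pi_i(W_i)$, exploit acyclicity to bound word length (you say at most~$|Q|$, the paper just says polynomial), sample words from~$\pi_i$, test membership in~$W_i$ exactly, and apply Hoeffding's inequality with the same sample count $m \ge \frac{2}{\varepsilon^2}\ln\frac{4}{\delta}$ followed by a union bound over $i\in\{1,2\}$.
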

Note that $\frac{1}{\varepsilon}$ is not polynomial in the bit size of~$\varepsilon$, so combining Theorems \ref{thm-approx-distance} and~\ref{thm-approx-sampling} does not imply breakthroughs in computational complexity.
\begin{proof}
Let $i \in \{1,2\}$.
The length of a longest word~$w$ with $\pi_i(w) > 0$ is polynomial in the encoding of the (acyclic) LMC~$\M$.
Thus, one can sample, in time polynomial in the encoding of $\M, \pi_1, \pi_2$, a word~$w$ according to~$\Pr_{\pi_i}$; i.e., any~$w$ is sampled with probability~$\pi_i(w)$.
Similarly, one can check in polynomial time whether $w \in W_i$.
If $m$~samples are taken, the proportion, say $\hat{p}_i$, of samples~$w$ such that $w \in W_i$ is an estimation of $\pi_i(W_i)$.
By Hoeffding's inequality, we have $|\hat{p}_i - \pi_i(W_i)| \ge \varepsilon/2$ with probability at most $2 e^{- m \varepsilon^2/2}$.
Choose $m \ge - \frac{2}{\varepsilon^2} \ln \frac{\delta}{4}$.
It follows that $|\hat{p}_i - \pi_i(W_i)| > \varepsilon/2$ with probability at most $\delta/2$.
Therefore, by~\eqref{eq-W1-W2}, the algorithm that returns $1 - \hat{p}_1 - \hat{p}_2$ has the required properties.
\end{proof}
 
\subsection{General LMCs} \label{sub-approximation-LMCs}

Finally we aim at an algorithm that approximates the distance within~$\varepsilon$, for $\varepsilon$ given in binary.
By Theorem~\ref{thm-approx-distance} such an algorithm cannot run in polynomial time unless P = PP.
For LMCs that are not necessarily acyclic, words of polynomial length may have only small probability,
so sampling approaches need to sample words of exponential length.
Thus, a naive extension of the algorithm from Theorem~\ref{thm-approx-sampling} leads to a randomized exponential-time algorithm.
We will develop a non-randomized PSPACE algorithm, resulting in the following theorem:
\begin{theorem} \label{thm-approximation-PSPACE}
Given an LMC, and initial distributions $\pi_1, \pi_2$, and an error bound $\varepsilon > 0$ in binary, one can compute in PSPACE a number~$x$ with $|d(\pi_1, \pi_2) - x| \le \varepsilon$.
\end{theorem}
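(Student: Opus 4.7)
The plan is to exploit the identity $d(\pi_1,\pi_2) = \frac12\sum_{w\in\Sigma^*}|\pi_1(w)-\pi_2(w)|$ from Proposition~\ref{prop-tv-dist-basic}: truncate the sum to words of bounded length, evaluate each term approximately in floating-point arithmetic of polynomial precision, and perform the summation itself using Ladner's theorem that polynomial-space counting functions lie in FPSPACE.

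For the truncation step, let $\pmin$ denote the smallest positive entry appearing in $\eta$ and in $M(a)$, $a\in\Sigma$; since all numbers are given in binary, $\pmin\ge 2^{-s}$ for some $s$ polynomial in the input. The requirement that every state has a positive-probability path to some $q$ with $\eta(q)>0$ implies that from any state the probability of terminating within $|Q|$ steps is at least $\pmin^{|Q|}$, so $\pi_i(\Sigma^{>N}) \le (1-\pmin^{|Q|})^{\lfloor N/|Q|\rfloor}$. Choosing $N$ of polynomial bit-length (hence $N$ itself at most singly exponential) makes the tail at most $\varepsilon/8$ for both $i\in\{1,2\}$, so truncating to $|w|\le N$ perturbs $\sum_w|\pi_1(w)-\pi_2(w)|$ by at most $\varepsilon/2$.

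For each $w=a_1\cdots a_k$ with $k\le N$, I would evaluate $\pi_i(w)=\pi_i M(a_1)\cdots M(a_k)\eta^\top$ iteratively, maintaining floating-point approximations $\tilde v_i^{(j)}\approx \pi_i M(a_1)\cdots M(a_j)$ with a mantissa of $p$ bits and an exponent of polynomial bit-length (sufficient since entries can be as small as $\pmin^N$). Standard error analysis of the per-letter matrix-vector product gives relative error $O(|Q|^2 2^{-p})$ per step, compounding over $k\le N$ steps to $O(N|Q|^2 2^{-p})$ in $\tilde\pi_i(w)$. A polynomially large $p$ therefore forces $\sum_{|w|\le N}|\tilde\pi_i(w)-\pi_i(w)|\le\varepsilon/16$ (using $\sum_w\pi_i(w)\le 1$), so by the triangle inequality $\sum_{|w|\le N}|\tilde\pi_1(w)-\tilde\pi_2(w)|$ is within $\varepsilon/4$ of the truncated $L_1$-sum. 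I would then scale by a common factor $2^C$ and apply Ladner's theorem to the nonnegative integer function $f(w)=\lfloor 2^C|\tilde\pi_1(w)-\tilde\pi_2(w)|\rfloor$, which is computable in polynomial space by reading the $k\le N$ letters of $w$ one at a time and updating $\tilde v_1^{(j)},\tilde v_2^{(j)}$ in place (so $w$ is never stored in full). Ladner's result that $\#\mathrm{PSPACE}$-sums lie in FPSPACE delivers $\sum_{|w|\le N}f(w)$ in polynomial space, and dividing by $2^{C+1}$ yields the desired approximation.

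The main obstacle is the rounding-error analysis: entries of $v_i^{(j)}$ may be as small as $\pmin^N$, which is doubly-exponentially small, so polynomial-precision floating point is adequate only because it is the \emph{relative} error on each entry that matters, while a polynomial-bit exponent can still track the tiny magnitudes. Absolute cancellations between $\tilde\pi_1(w)$ and $\tilde\pi_2(w)$ are handled via the $L_1$-sum rather than by classifying each $w$ as belonging to $W_1$ or $W_2$; a uniform relative error bound multiplied by the total mass of $1$ then yields an additive error of at most $\varepsilon$ after summation over exponentially many words, which is the crux of the argument.
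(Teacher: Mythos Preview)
Your outline shares the paper's three ingredients---length truncation to $N$ of polynomial bit-size, floating-point evaluation of $\widetilde{\pi}_i(w)$ with polynomial mantissa and exponent, and Ladner's $\#\mathrm{PSPACE}=\mathrm{FPSPACE}$---and your error control via $\sum_w|\widetilde{\pi}_i(w)-\pi_i(w)|\le\theta\sum_w\pi_i(w)\le\theta$ is correct. The gap is in the final summation. You propose to compute $\sum_{|w|\le N} f(w)$ for $f(w)=\lfloor 2^C|\widetilde{\pi}_1(w)-\widetilde{\pi}_2(w)|\rfloor$ and claim $f$ is ``computable in polynomial space''. But flooring costs up to $2^{-C}$ per word, and the number of words in $\Sigma^{\le N}$ with positive $\pi_1{+}\pi_2$ mass can be $|\Sigma|^{\Theta(N)}$, which is doubly exponential since $N$ is itself exponential. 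Keeping the aggregate flooring error below~$\varepsilon$ therefore forces $C\ge\Theta(N\log|\Sigma|)$, so $C$ must be exponential; then $f(w)$ is an integer with up to $C$ bits and cannot be written down in polynomial space, so your polynomial-space claim fails as stated. (It is repairable---the bits of $f(w)$ can be produced one at a time in polynomial space and compared on the fly with a nondeterministically guessed index---but that nontrivial argument is absent from your sketch, and it is precisely the kind of detail on which the PSPACE bound hinges.)

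The paper sidesteps this by never summing $|\widetilde{\pi}_1(w)-\widetilde{\pi}_2(w)|$. It uses the floating-point values only to \emph{classify} each word into $\widetilde{W}_1=\{w\in\Sigma^{\le n}:\widetilde{\pi}_1(w)<\widetilde{\pi}_2(w)\}$ or its complement $\widetilde{W}_2$, shows that $\pi_1(\widetilde{W}_1)+\pi_2(\widetilde{W}_2)$ lies within $O(\theta)$ of $1-d(\pi_1,\pi_2)$, and then computes each $\pi_i(\widetilde{W}_i)$ as the acceptance probability of a probabilistic PSPACE machine that samples $w$ letter by letter according to the \emph{exact} transition probabilities of the LMC and tests $w\in\widetilde{W}_i$ in polynomial space. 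Because the aggregation is weighted by the true measure $\pi_i$, Ladner's counting of accepting computations delivers this probability as an exact rational with no per-word rounding at the summation stage; only a single final truncation to polynomially many bits is needed.
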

The approximation algorithm combines special techniques.
The starting point is again the expression for the distance in~\eqref{eq-W1-W2}.
The following lemma allows the algorithm to neglect words that are longer than exponential:

\begin{restatable}{lemmaS}{lemapproximationlengthbound} \label{lem-approximation-length-bound}
Given an LMC, and initial distributions $\pi_1, \pi_2$, and a rational number $\lambda > 0$ in binary, one can compute in polynomial time a number $n \in \N$ in binary such that
\[
\pi_i(\Sigma^{> n})  \ \le \ \lambda \qquad \text{for both } i \in \{1,2\}\,.
\]
\end{restatable}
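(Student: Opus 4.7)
The plan is to exploit the standing requirement (stated in Section~\ref{sec-prelim}) that every state of an LMC has a positive-probability path to some state with positive end-of-word probability. Let $\pmin$ denote the smallest strictly positive number that appears among the entries of the matrices $M(a)$, $a \in \Sigma$, and of the vector~$\eta$. Since every such entry is a rational given in binary, one can write $\pmin = p/q$ with $p,q$ positive integers of bit size polynomial in the encoding of~$\M$.

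From these ingredients I would establish a geometric-decay bound. For any state~$q$, a shortest path in the transition graph to some $q^\star$ with $\eta(q^\star) > 0$ is simple and hence uses at most $|Q|-1$ letter-emitting transitions, each of probability at least~$\pmin$; the final termination step contributes an additional factor at least~$\pmin$. So $\Pr_{\delta_q}(\Sigma^{\ge |Q|}) \le 1 - \pmin^{|Q|}$ for every~$q$, and iterating via the Markov property (conditioning on the state after each further block of $|Q|$ emissions) yields, for any initial distribution $\pi$ and any $k \in \N$,
\[
 \pi(\Sigma^{\ge k|Q|}) \ \le \ \bigl(1 - \pmin^{|Q|}\bigr)^k.
\]

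To make the right-hand side drop below~$\lambda$, I would use the inequality $1-x \le e^{-x}$, which reduces the requirement to $k \ge \ln(1/\lambda) \cdot \pmin^{-|Q|}$. Writing $\lambda = u/v$ with $u,v$ positive integers, one has $\ln(1/\lambda) \le \log_2 v$, which is bounded by the bit size of~$v$. Consequently $k := \lceil \log_2 v \rceil \cdot q^{|Q|}$ is a positive integer of polynomial bit size that satisfies the requirement, and $n := k \cdot |Q|$ then does the job for the lemma (since $\Sigma^{>n} \subseteq \Sigma^{\ge k|Q|}$).

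The only computational concern is producing $q^{|Q|}$ in polynomial time, which is routine via repeated squaring since $q$ has polynomial bit size. I do not foresee any deeper obstacle; the main care is to keep all auxiliary quantities as integers of polynomial bit size so that the final~$n$, written in binary, has polynomial length in the input encoding rather than merely in unary. Losing track of this distinction would yield only a bound of exponential length, which would be useless for the PSPACE algorithm this lemma is meant to feed into.
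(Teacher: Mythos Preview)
Your proposal is correct and follows essentially the same route as the paper: both use the shortest-path argument to get $\Pr_{\delta_q}(\Sigma^{\ge |Q|}) \le 1-\pmin^{|Q|}$, iterate to the geometric bound $(1-\pmin^{|Q|})^k$, and then invert via $1-x \le e^{-x}$ (equivalently $x \le -\ln(1-x)$) to pick~$k$. The only difference is cosmetic: the paper just asserts that a suitable $k \ge (-\ln\lambda)/\pmin^{|Q|}$ is computable in polynomial time, whereas you spell out an explicit integer $k = \lceil \log_2 v\rceil \cdot q^{|Q|}$ and check its bit size, which is a nice touch.
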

\newcommand{\tpi}{\widetilde{\pi}}%
\newcommand{\tW}{\widetilde{W}}%
\noindent For $n$ as in Lemma~\ref{lem-approximation-length-bound} and both $i \in \{1,2\}$, define $W_i' := W_i \cap \Sigma^{\le n}$.
By Lemma~\ref{lem-approximation-length-bound} it would suffice to approximate $\pi_i(W_i')$ for both~$i$, as we have by~\eqref{eq-W1-W2}:
\begin{equation}
\pi_1(W_1') + \pi_2(W_2') \quad \le \quad  1 - d(\pi_1, \pi_2) \quad \le \quad \pi_1(W_1') + \pi_2(W_2') + 2 \lambda
\label{eq-dist-Wiprime}
\end{equation}
However, it not obvious if $\pi_i(W_i')$ can be approximated efficiently, as for exponentially long words~$w$ it is hard to check if $w \in W_i'$ holds. 
Indeed, $\pi_i(w)$ may be very small and may have exponential bit size.
The main trick of our algorithm will be to approximate~$\pi_i(w)$ using floating-point arithmetic with small \emph{relative} error, say $\tpi_i(w) \in [ \pi_i(w) (1 - \theta), \pi_i(w) (1 + \theta)]$ for small $\theta>0$.
This allows us to approximate $\pi_1(W_1') + \pi_2(W_2')$ 
(crucially, not the two summands individually).
Indeed, define approximations for $W_1'$ and~$W_2'$ by
\[ 
 \tW_1 \ := \ \{w \in \Sigma^{\le n} : \tpi_1(w) < \tpi_2(w)\} \qquad \text{and} \qquad \tW_2 \ := \ \{w \in \Sigma^{\le n} : \tpi_1(w) \ge \tpi_2(w)\}\,.
\]
Then we have:
\begin{alignat*}{2}
 \pi_2(w) \ &\le \ \pi_1(w) \  <  \ \pi_2(w) + \theta \pi_1(w) + \theta \pi_2(w) && \qquad \text{for } w \in \tW_1 \cap W_2' \\
 \pi_1(w) \ & <  \ \pi_2(w) \ \le \ \pi_1(w) + \theta \pi_1(w) + \theta \pi_2(w) && \qquad \text{for } w \in \tW_2 \cap W_1'
\end{alignat*}
It follows:
\begin{equation}
\begin{aligned}
 \pi_2(\tW_1 \cap W_2') \ &\le \ \pi_1(\tW_1 \cap W_2') \ \le  \ \pi_2(\tW_1 \cap W_2') + 2 \theta  \\
 \pi_1(\tW_2 \cap W_1') \ &\le \ \pi_2(\tW_2 \cap W_1') \ \le  \ \pi_1(\tW_2 \cap W_1') + 2 \theta 
\end{aligned}
\label{eq-cross-terms}
\end{equation}
Hence we have:
\begin{align*}
\pi_1(W_1') + \pi_2(W_2') 
\ &=   \ \pi_1(\tW_1 \cap W_1') + \pi_2(\tW_1 \cap W_2') + \pi_2(\tW_2 \cap W_2') + \pi_1(\tW_2 \cap W_1') \\
\ &\mathop{\le}^\text{\eqref{eq-cross-terms}} \ \pi_1(\tW_1) + \pi_2(\tW_2) \\
\ &\mathop{\le}^\text{\eqref{eq-cross-terms}} \ \pi_1(W_1') + \pi_2(W_2') + 4 \theta
\end{align*}
By combining this with~\eqref{eq-dist-Wiprime} we obtain:
\begin{equation} \label{eq-approx-theta-epsilon}
\pi_1(\tW_1) + \pi_2(\tW_2) - 4 \theta \quad \le \quad  1 - d(\pi_1, \pi_2) \quad \le \quad \pi_1(\tW_1) + \pi_2(\tW_2) + 2 \lambda
\end{equation}
It remains to tie two loose ends:
\begin{enumerate}
\item develop a PSPACE method to approximate $\pi_i(w)$ within \emph{relative} error~$\theta$ for any $\theta > 0$ in binary, where $w$ is an at most exponentially long word (given on a special input tape);
\item based on this method, approximate $\pi_i(\tW_i)$ in PSPACE.
\end{enumerate}
For item~1 we use floating-point arithmetic, for item~2 we use Ladner's result~\cite{Ladner89} on counting in polynomial space.

For $k \in \N$, define $\F_k := \{m \cdot 2^z : z \in \Z,\ 0 \le m \le 2^k-1\}$, the set of \emph{$k$-bit floating-point numbers}.
For our purposes, nonnegative floating-point numbers suffice, and there is no need to bound the exponent~$z$, as all occurring exponents will have polynomial bit size.
We define rounding as usual: for $x \ge 0$ write $\round{x}_k$ for the number in~$\F_k$ that is nearest to~$x$ (break ties in an arbitrary but deterministic way).
Then there is $\delta$ with $\round{x}_k = x \cdot (1+\delta)$ and $|\delta| < 2^{-k}$, see \cite[Theorem~2.2]{Higham}.
A standard analysis of rounding errors in finite-precision arithmetic~\cite[Chapter~3]{Higham} yields the following lemma:
\begin{lemma} \label{lem-error-analysis}
Let $\pi$ be an initial distribution and $0 < \theta < 1$.
Let $k \in \N$ be such that $2^k \ge 2 (n+1) |Q| / \theta$.
Let $w = a_1 a_2 \cdots a_m \in \Sigma^*$ with $m \le n$.
Compute $\tpi(w)$ as
\[ ((\cdots((\pi \cdot M(a_1)) \cdot M(a_2)) \cdots ) \cdot M(a_{m})) \cdot \eta^\top \,,
\]
where rounding~$\mathord{\round{\cdot}_k}$ is applied after each individual (scalar) multiplication and addition.
Then $\tpi(w) \in [\pi(w) (1 - \theta), \pi(w) (1 + \theta)]$.
\end{lemma}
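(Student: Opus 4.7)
The plan is to expand the exact output $\pi(w) = \pi M(w) \eta^\top$ as a sum over \emph{paths} $p = (q_0,q_1,\ldots,q_m) \in Q^{m+1}$, where the contribution of each path is $V_p := \pi(q_0) \cdot \prod_{i=1}^{m} M(a_i)(q_{i-1},q_i) \cdot \eta(q_m) \ge 0$ and $\pi(w) = \sum_p V_p$. Since every application of $\round{\cdot}_k$ introduces a factor $1+\delta$ with $|\delta| < u := 2^{-k}$, and since all intermediate quantities produced during the computation are nonnegative, the computed value can be written in the form $\tpi(w) = \sum_p V_p \cdot E_p$, where each $E_p$ is a product of factors $(1+\delta)$, $|\delta| < u$, corresponding to precisely those roundings that affect path~$p$.

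Next I would bound $N_p$, the number of rounding factors contributing to $E_p$. The computation consists of $m$ vector-by-matrix multiplications followed by one final inner product with $\eta^\top$, hence $m+1 \le n+1$ ``stages'', each of which computes inner products of length $|Q|$. In a sequential evaluation of an inner product $\sum_{q \in Q} x_q y_q$, any particular summand incurs at most one multiplication-rounding and at most $|Q|-1$ subsequent addition-roundings, i.e., at most $|Q|$ rounding factors per stage. Summing over stages, $N_p \le (n+1)|Q|$ for every path~$p$.

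I would then invoke the standard ``$\gamma_N$ bound'' (see \cite[Lemma~3.1]{Higham}): if $E_p = \prod_{j=1}^{N_p}(1+\delta_j)$ with $|\delta_j| < u$ and $N_p u < 1$, then $E_p = 1 + \gamma_p$ with $|\gamma_p| \le N_p u / (1 - N_p u)$. The hypothesis $2^k \ge 2(n+1)|Q|/\theta$ gives $N_p u \le \theta/2 < 1$, and an elementary calculation yields $(\theta/2)/(1-\theta/2) \le \theta$ for $\theta \in (0,1)$, hence $|\gamma_p| \le \theta$. Using $V_p \ge 0$ and summing, $|\tpi(w) - \pi(w)| = \bigl|\sum_p V_p \gamma_p\bigr| \le \theta \sum_p V_p = \theta\,\pi(w)$, which is the claimed enclosure.

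The main difficulty is not mathematical but one of careful bookkeeping: ensuring that each individual rounding is attributed correctly to the summands it affects in each inner product. Higham's treatment of left-to-right sequential inner products handles this cleanly, and the composition across the $m+1$ stages is routine because the per-stage errors multiply into a single per-path product of $(1+\delta)$ factors. Nonnegativity of all inputs is crucial: it rules out destructive cancellation in~$\pi(w)$ and allows the per-path relative error bound to be transferred verbatim to a uniform relative error bound on the entire sum.
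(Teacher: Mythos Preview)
Your proof is correct and reaches the same accumulated-error bound $\gamma_{(n+1)|Q|} \le \theta$ as the paper, but via a different decomposition. The paper invokes Higham's backward-error result for floating-point matrix--vector products \cite[Eq.~(3.11)]{Higham}, writing $\tpi(w) = \pi\,(M(a_1)+\Delta_1)\cdots(M(a_m)+\Delta_m)\,(\eta+\widetilde\eta)^\top$ with $|\Delta_i| \le \gamma_{|Q|}\,M(a_i)$ componentwise, and then bounds $(1+\gamma_{|Q|})^{m+1}-1 \le \gamma_{(m+1)|Q|}$ using \cite[Lemmas~3.3 and~3.8]{Higham}. You instead expand $\pi(w)$ as a sum over paths and count at most $(n+1)|Q|$ rounding factors along each path, needing only the elementary bound \cite[Lemma~3.1]{Higham}. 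Your route is more self-contained and makes explicit why nonnegativity matters (no cancellation, so the per-path relative error passes to the sum); the paper's route is terser because it outsources the inner-product bookkeeping to Higham's matrix-level lemmas. The two are really the same accumulation-of-roundings argument, phrased at the path level versus the matrix level.
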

\begin{proof}
For all $i \in \N$ write $\gamma_i := i \cdot 2^{-k} / (1 - i \cdot 2^{-k})$.
By \cite[Equation~(3.11)]{Higham} there are matrices $\Delta_1, \ldots, \Delta_m$ and a vector $\widetilde{\eta}$ such that
\[
 \tpi(w) \ = \ \pi \cdot (M(a_1) + \Delta_1) \cdot (M(a_2) + \Delta_2) \cdots (M(a_m) + \Delta_m) \cdot (\eta + \widetilde{\eta})^\top
\]
and $|\Delta_i| \le \gamma_{|Q|} M(a_i)$ and $|\widetilde{\eta}| \le \gamma_{|Q|} \eta$, where by $|\Delta_i|$ and $|\widetilde{\eta}|$ we mean the matrix and vector obtained by taking the absolute value componentwise.
(In words, the result $\tpi(w)$ of the floating-point computation is the result of applying an exact computation with slightly perturbed data---a ``backward error'' result.)
It follows:
\begin{align*}
|\tpi(w) - \pi(w)| 
\ &\le \ \bigg( -1 + \prod_{j=1}^{m+1}\left(1 + \gamma_{|Q|}\right)\bigg) \pi(w) && \text{by \cite[Lemma~3.8]{Higham}} \\
\ &\le \ \gamma_{(m+1) \cdot |Q|} \pi(w) && \text{by \cite[Lemma~3.3]{Higham}} \\
\ &\le \ 2 (n+1) |Q| \cdot 2^{-k} \pi(w) && \text{as } (n+1) |Q| \cdot 2^{-k} \le 1/2 \\
\ &\le \ \theta \pi(w) && \hfill \qedhere
\end{align*}
\end{proof}
%
The development so far suggests the following approximation approach:
Let $\varepsilon>0$ be the error bound from the input. 
Let $n \in \N$ be the number from Lemma~\ref{lem-approximation-length-bound}, where $\lambda$ is such that $2 \lambda = \varepsilon/2$.
Let $k \in \N$ be the smallest number such that $2^k \ge 2 (n+1) |Q| / \theta$, where $\theta$ is such that $4 \theta = \varepsilon/2$.
Observe that $k$ (the bit size of~$2^k$) is polynomial in the input.
Define, for each word~$w$ and both~$i$, the approximation~$\tpi_i(w)$ as in Lemma~\ref{lem-error-analysis}.
This defines also $\tW_1, \tW_2$.
By~\eqref{eq-approx-theta-epsilon} we have:
\begin{equation*}
\pi_1(\tW_1) + \pi_2(\tW_2) - \frac{\varepsilon}{2} \quad \le \quad  1 - d(\pi_1, \pi_2) \quad \le \quad \pi_1(\tW_1) + \pi_2(\tW_2) + \frac{\varepsilon}{2}
\end{equation*}
Thus we can complete the proof of Theorem~\ref{thm-approximation-PSPACE} by proving the following lemma:
\begin{restatable}{lemmaS}{lemapproxtWi} \label{lem-approx-tWi}
For both~$i$, one can approximate $\pi_i(\tW_i)$ within $\varepsilon/4$ in PSPACE.
\end{restatable}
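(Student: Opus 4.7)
The plan is to approximate $\pi_i(\tW_i) = \sum_{w \in \tW_i} \pi_i(w)$ by replacing the true probabilities with their floating-point approximations and then computing the resulting exponential sum in polynomial space via Ladner's counting theorem. Set $\widetilde{S}_i := \sum_{w \in \tW_i} \tpi_i(w)$. By Lemma~\ref{lem-error-analysis}, $\tpi_i(w) \in [(1-\theta)\pi_i(w), (1+\theta)\pi_i(w)]$, so
\[
\bigl|\pi_i(\tW_i) - \widetilde{S}_i\bigr| \;\le\; \theta \cdot \pi_i(\Sigma^{\le n}) \;\le\; \theta \;\le\; \varepsilon/8,
\]
and it suffices to approximate $\widetilde{S}_i$ within $\varepsilon/8$ in PSPACE.

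I would next realise $\widetilde{S}_i$ as a reachability probability in a succinctly described acyclic Markov chain~$\M'$. Its states are tuples $(q, v_1, v_2, m)$, where $q \in Q$ is the current LMC state, $v_1, v_2 \in \F^Q$ are the floating-point vectors maintained by the forward computation of Lemma~\ref{lem-error-analysis} along the prefix generated so far (initialised to $\pi_1, \pi_2$ and updated by $v_j \gets \round{v_j \cdot M(a)}_k$ at each step), and $m \in \{0,\ldots,n\}$ is a step counter. Each state has polynomial bit size (as $k$ and $\log_2 n$ are polynomial) and the transition relation is polynomial-time computable; a terminating configuration is declared accepting exactly when the comparison of $\round{v_1\cdot\eta^\top}_k$ with $\round{v_2\cdot\eta^\top}_k$ agrees with the defining condition of $\tW_i$. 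By construction, the probability of reaching an accepting configuration in $\M'$, starting with state sampled from $\pi_i$, equals $\widetilde{S}_i$.

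To evaluate this reachability probability in PSPACE, I would fix a polynomial precision $L := \lceil \log_2(8/\varepsilon)\rceil$ and extract $\lfloor 2^L \widetilde{S}_i \rfloor$ by binary-searching the threshold tests ``is $\widetilde{S}_i \ge j/2^L$?'' for $j \in \{0,\ldots,2^L\}$; dividing by $2^L$ then yields an approximation within $2^{-L} \le \varepsilon/8$. Each threshold query I would reduce, using the common-denominator structure of the transition rationals of~$\M'$, to an inequality between two sums of polynomial-bit-sized nonnegative integer weights over the exponentially many run-certificates of $\M'$. Each such summand is PSPACE-computable from the certificate, and the sums themselves are polynomially bounded, so Ladner's theorem~\cite{Ladner89} (polynomial-space counting equals $\mathrm{FPSPACE}$) evaluates them in polynomial space, completing the threshold test and hence the approximation.

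The main obstacle is precisely this reduction to integer counting. The natural common denominator for $\widetilde{S}_i$ has exponential bit size, because the floating-point exponents appearing in $\tpi_i(w)$ may range over polynomially many bits (hence exponentially many values), so one cannot simply clear denominators globally. The resolution is to exploit that $\widetilde{S}_i$ lies in $[0,1]$ and that only the $O(\log(1/\varepsilon))$ most significant bits matter, so each threshold test needs only polynomially many high-order bits of a suitably aligned fixed-point representation; with careful exponent bookkeeping these high-order bits can be read off from $\mathrm{FPSPACE}$-computable partial sums, which is where Ladner's theorem does the real work.
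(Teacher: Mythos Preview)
Your overall approach---build a product chain that samples $w$ while tracking the floating-point vectors, then invoke Ladner---is the paper's approach. However, you have introduced an unnecessary detour through $\widetilde{S}_i$ that leads to a misidentification. In your chain~$\M'$ the transitions out of a state $(q,v_1,v_2,m)$ are governed by the \emph{exact} rational probabilities $M(a)(q,q')$ of the original LMC; the floating-point vectors $v_1,v_2$ are carried along only to decide acceptance. Hence the probability that a run started according to~$\pi_i$ reaches an accepting configuration is $\sum_{w\in\tW_i}\pi_i(w)=\pi_i(\tW_i)$, \emph{not} $\widetilde{S}_i=\sum_{w\in\tW_i}\tpi_i(w)$. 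This is exactly the quantity you want, so your first paragraph and your ``main obstacle'' about floating-point exponents both drop out.

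What remains is the paper's argument: a probabilistic PSPACE machine samples $w\sim\Pr_{\pi_i}$ (using the true rationals), maintains $\tpi_1(w),\tpi_2(w)$ via Lemma~\ref{lem-error-analysis}, rejects once $|w|>n$, and accepts iff the floating-point comparison places $w$ in~$\tW_i$; its acceptance probability is $\pi_i(\tW_i)$ on the nose. Replacing each probabilistic branch by $D$-way nondeterministic branching (where $D$ is the product of all denominators in $\pi_1,\pi_2,M,\eta$) yields a machine with exactly $D^{n+2}$ computations, so $\pi_i(\tW_i)=p/D^{n+2}$ with $p$ the number of accepting computations. Ladner's $\#\mathrm{PSPACE}=\mathrm{FPSPACE}$ then lets one compute any bit of~$p$ and of $D^{n+2}$ in polynomial space, from which an $\varepsilon/4$-approximation of the quotient follows directly---no binary search and no exponent bookkeeping needed.
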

\begin{proof}
We discuss only the approximation of~$\pi_1(\tW_1)$; the case of~$\pi_2(\tW_2)$ is similar.

Construct a ``probabilistic PSPACE Turing machine''~$\T$ that samples a random word~$w$ according to~$\Pr_{\pi_1}$.
For that, $\T$~uses probabilistic branching according to the transition probabilities in~$M$.
While producing~$w$ in this way, but without storing~$w$ as a whole, $\T$~computes also the values $\tpi_1(w), \tpi_2(w)$ according to Lemma~\ref{lem-error-analysis}.
If and when $w$ gets longer than~$n$ then $\T$~rejects.
If $\tpi_1(w) < \tpi_2(w)$ then $\T$~accepts; otherwise $\T$~rejects.
The probability that $\T$~accepts equals $\pi_1(\tW_1)$.
This probability can be computed in PSPACE by Ladner's result~\cite{Ladner89} on counting in polynomial space.
To be precise, note that this probability is a fraction $p/q$ of two natural numbers $p,q$ of at most exponential bit size.
By Ladner's result one can compute arbitrary bits of $p,q$ in PSPACE.
Hence an approximation within~$\varepsilon/4$ can also be computed in PSPACE.
Technical details about how we apply Ladner's result are provided \iftechrep{in Appendix~\ref{app-ladner}}{in~\cite{18K:ICALP-TR}}.
\end{proof}

\section{Open Problems}

In this paper we have considered the total variation distance between the distributions on finite words that are generated by two LMCs.
In a more general version of LMCs, the end-of-word probabilities are zero, so that the LMC generates infinite words.
The production of finite words $w \in \Sigma^*$ can be simulated by producing $w \eow \eow \eow \cdots$ where $\eow$~is an end-of-word symbol.
It follows that the undecidability and hardness results of this paper apply equally to infinite-word LMCs.
In fact, all these results strengthen those from~\cite{14CK-LICS}, where the total variation distance between infinite-word LMCs is studied.
The PSPACE approximation algorithm in this paper (Theorem~\ref{thm-approximation-PSPACE}) applies only to finite words, and the author does not know if it can be generalized to infinite-word LMCs.
Whether the non-strict threshold-distance problem is decidable is open, both for finite- and for infinite-word LMCs.

Another direction concerns LMCs that are \emph{not hidden}, i.e., where each emitted label identifies the next state; or, slightly more general, \emph{deterministic} LMCs, i.e., where each state and each emitted label identify the next state.
The reduction that shows square-root-sum hardness in~\cite[Theorem 15]{14CK-LICS} also applies to the threshold-distance problem for deterministic finite-word LMCs, but the author does not know a hardness result for approximating the distance between deterministic LMCs.

\subparagraph*{Acknowledgement.}
The author thanks anonymous referees for their helpful comments.

\bibliography{db}

\begin{thebibliography}{10}

\bibitem{Weather}
P.~Ailliot, C.~Thompson, and P.~Thomson.
\newblock Space-time modelling of precipitation by using a hidden {M}arkov
  model and censored {G}aussian distributions.
\newblock {\em Journal of the Royal Statistical Society}, 58(3):405--426, 2009.

\bibitem{GeneFinding}
M.~Alexandersson, S.~Cawley, and L.~Pachter.
\newblock {SLAM}: Cross-species gene finding and alignment with a generalized
  pair hidden {M}arkov model.
\newblock {\em Genome Research}, 13:469--502, 2003.

\bibitem{ChenBW12}
D.~Chen, F.~van Breugel, and J.~Worrell.
\newblock On the complexity of computing probabilistic bisimilarity.
\newblock In {\em Proceedings of FoSSaCS}, volume 7213 of {\em LNCS}, pages
  437--451. Springer, 2012.

\bibitem{Gesture}
F.-S. Chen, C.-M. Fu, and C.-L. Huang.
\newblock Hand gesture recognition using a real-time tracking method and hidden
  {M}arkov models.
\newblock {\em Image and Vision Computing}, 21(8):745--758, 2003.

\bibitem{14CK-LICS}
T.~Chen and S.~Kiefer.
\newblock On the total variation distance of labelled {M}arkov chains.
\newblock In {\em Proceedings of CSL-LICS}, pages 33:1--33:10, 2014.

\bibitem{DNA-modeling}
G.A. Churchill.
\newblock Stochastic models for heterogeneous {DNA} sequences.
\newblock {\em Bulletin of Mathematical Biology}, 51(1):79--94, 1989.

\bibitem{CortesMRdistance}
C.~Cortes, M.~Mohri, and A.~Rastogi.
\newblock {$L_p$} distance and equivalence of probabilistic automata.
\newblock {\em International Journal of Foundations of Computer Science},
  18(04):761--779, 2007.

\bibitem{SignalProcessing}
M.S. Crouse, R.D. Nowak, and R.G. Baraniuk.
\newblock Wavelet-based statistical signal processing using hidden {M}arkov
  models.
\newblock {\em IEEE Transactions on Signal Processing}, 46(4):886--902, April
  1998.

\bibitem{Storm}
C.~Dehnert, S.~Junges, J.-P. Katoen, and M.~Volk.
\newblock A {Storm} is coming: A modern probabilistic model checker.
\newblock In {\em Proceedings of Computer Aided Verification (CAV)}, pages
  592--600. Springer, 2017.

\bibitem{DesharnaisGJPmetrics}
J.~Desharnais, V.~Gupta, R.~Jagadeesan, and P.~Panangaden.
\newblock Metrics for labelled {M}arkov processes.
\newblock {\em Theoretical Computer Science}, 318(3):323--354, 2004.

\bibitem{durbin1998biological}
R.~Durbin.
\newblock {\em Biological Sequence Analysis: Probabilistic Models of Proteins
  and Nucleic Acids}.
\newblock Cambridge University Press, 1998.

\bibitem{HMM-comp-biology}
S.R. Eddy.
\newblock What is a hidden {M}arkov model?
\newblock {\em Nature Biotechnology}, 22(10):1315--1316, October 2004.

\bibitem{Higham}
N.~Higham.
\newblock {\em Accuracy and Stability of Numerical Algorithms}.
\newblock SIAM, second edition, 2002.

\bibitem{KSM-counting-strings}
S.~Kannan, Z.~Sweedyk, and S.~Mahaney.
\newblock Counting and random generation of strings in regular languages.
\newblock In {\em Proceedings of SODA}, pages 551--557, 1995.

\bibitem{KMOWW:CAV11}
S.~Kiefer, A.S. Murawski, J.~Ouaknine, B.~Wachter, and J.~Worrell.
\newblock Language equivalence for probabilistic automata.
\newblock In {\em Proceedings of Computer Aided Verification (CAV)}, volume
  6806 of {\em LNCS}, pages 526--540. Springer, 2011.

\bibitem{ProteinStructure}
A.~Krogh, B.~Larsson, G.~von Heijne, and E.L.L. Sonnhammer.
\newblock Predicting transmembrane protein topology with a hidden {Markov}
  model: Application to complete genomes.
\newblock {\em Journal of Molecular Biology}, 305(3):567--580, 2001.

\bibitem{KNP11}
M.~Kwiatkowska, G.~Norman, and D.~Parker.
\newblock {PRISM} 4.0: Verification of probabilistic real-time systems.
\newblock In {\em Proceedings of Computer Aided Verification (CAV)}, volume
  6806 of {\em LNCS}, pages 585--591. Springer, 2011.

\bibitem{Ladner89}
R.~E. Ladner.
\newblock Polynomial space counting problems.
\newblock {\em SIAM Journal on Computing}, 18(6):1087--1097, 1989.

\bibitem{LyngsoP02}
R.B. Lyngs{\o} and C.N.S. Pedersen.
\newblock The consensus string problem and the complexity of comparing hidden
  {M}arkov models.
\newblock {\em J. Comput. Syst. Sci.}, 65(3):545--569, 2002.

\bibitem{ProbAndComp}
M.~Mitzenmacher and E.~Upfal.
\newblock {\em Probability and Computing}.
\newblock Cambridge University Press, 2005.

\bibitem{Paz71}
A.~Paz.
\newblock {\em Introduction to Probabilistic Automata}.
\newblock Academic Press, 1971.

\bibitem{Rabiner89}
L.R. Rabiner.
\newblock A tutorial on hidden {M}arkov models and selected applications in
  speech recognition.
\newblock {\em Proceedings of the IEEE}, 77(2):257--286, 1989.

\bibitem{Schutzenberger}
M.-P. Sch\"{u}tzenberger.
\newblock On the definition of a family of automata.
\newblock {\em Inf.\ and Control}, 4:245--270, 1961.

\bibitem{To91}
S.~Toda.
\newblock {PP} is as hard as the polynomial-time hierarchy.
\newblock {\em SIAM Journal of Computing}, 20(5):865--877, 1991.

\end{thebibliography}

\iftechrep{

\newpage
\appendix
\section{Missing Proofs}

\subsection{Proof of Proposition~\ref{prop-threshold-distance-acyclic}}
Here is Proposition~\ref{prop-threshold-distance-acyclic} from the main text:
\propthresholddistanceacyclic*
\begin{proof}
Since the problem \#NFA is \#P-complete~\cite{KSM-counting-strings}, the problem whether $|L(\A) \cap \Sigma^n| \le u$ holds (for given NFA~$\A$ and given $n \in \N$ in unary and given $u \in \N$ in binary) is PP-hard.
Indeed, reduce, using \#P-hardness of \#NFA, the canonical PP-complete problem, MajSAT, to the problem whether $|L(\A) \cap \Sigma^n| > u$ holds.
It follows that the problem whether $|L(\A) \cap \Sigma^n| \le u$ holds is PP-hard, as PP is closed under complement.
(The problem is in PP, hence PP-complete, but this is not needed.)

We may assume $u \le |\Sigma^n|$ as otherwise the problem is trivial.
Using the construction from Lemma~\ref{lem-non-strict-threshold-distance} we have:
\begin{alignat*}{2}
  && \ |L(\A) \cap \Sigma^n| &\ \le \ u \\
\Longleftrightarrow\qquad && \ |\Sigma^n \setminus L(\A)| &\ \ge \ |\Sigma|^n - u \\
\Longleftrightarrow\qquad && \ d(\pi_1, \pi_2) &\ \ge \ y + \frac{|\Sigma|^n - u}{|\Sigma|^n |Q|^n}
\intertext{%
Deciding the last inequality is an instance of the non-strict threshold-distance problem.
Concerning the strict threshold-distance problem, replace $|L(\A) \cap \Sigma^n| \le u$ by $|L(\A) \cap \Sigma^n| < u+1$, which leads to
}
 && d(\pi_1, \pi_2) &\ > \ y + \frac{|\Sigma|^n - (u+1)}{|\Sigma|^n |Q|^n}\,,
\end{alignat*}
an instance of the strict threshold-distance problem.
\end{proof}

\subsection{Proofs of Theorem~\ref{thm-threshold-distance-acyclic} and Remark~\ref{rem-dk}} \label{app-thm-threshold-distance-acyclic}
Here is Theorem~\ref{thm-threshold-distance-acyclic} from the main text:
\thmthresholddistanceacyclic*
\begin{proof}
In view of Proposition~\ref{prop-threshold-distance-acyclic} it suffices to prove membership in PP.
We start with the strict threshold-distance problem.
Following the definition of~PP, it suffices to construct a nondeterministic polynomial-time bounded Turing machine~$\T$ with the following property: if the input of~$\T$ is an acyclic LMC $\M = (Q, \Sigma, M, \eta)$ and initial distributions $\pi_1, \pi_2$ and a threshold~$\tau$ then $\T$ accepts on at least half of its computations if and only if $d(\pi_1, \pi_2) > \tau$.
In the following we describe the operation of~$\T$.

Let the input of~$\T$ be $\M, \pi_1, \pi_2, \tau$, as above.
The Turing machine~$\T$ computes the product, $D \in \N$, of all denominators that appear in the encodings of $\pi_1, \pi_2, M, \eta, \tau$.
Then $\T$ computes integer vectors $\pi_1', \pi_2', \eta'$ and integer matrices~$M'(a)$ for all $a \in \Sigma$ such that $\pi_1 = \pi_1'/D$ and $\pi_2 = \pi_2'/D$ and $\eta' = \eta/D$ and $M(a) = M'(a)/D$ for all $a \in \Sigma$.
For all $w \in \Sigma^*$ we have $M(w) = M'(w)/D^{|w|}$ and thus for $i \in \{1,2\}$:
\begin{equation}
 \pi_i(w) = \frac{\pi_i' M'(w) (\eta')^\top}{D^{|w|+2}} \label{eq-PP-upper-pii}
\end{equation}
Then $\T$ computes (using a polynomial-time graph algorithm on the transition graph) the length, $n$, of a longest word~$w$ with $\pi_1(w) > 0$ or $\pi_2(w) > 0$.
We have:
\begin{align*}
d(\pi_1, \pi_2) \ 
= & \ \frac12 \sum_{w \in \Sigma^*} | \pi_1(w) - \pi_2(w) | && \text{Proposition~\ref{prop-tv-dist-basic}} \\
= & \ \frac12 \sum_{w \in \Sigma^{\le n}} | \pi_1(w) - \pi_2(w) | && \text{definition of~$n$} \\
= & \ \frac12 \sum_{w \in \Sigma^{\le n}} \frac{D^{n - |w|} \left|(\pi_1' - \pi_2') M'(w) (\eta')^\top \right|}{D^{n+2}} && \text{by~\eqref{eq-PP-upper-pii}}
\end{align*}
It follows:
\begin{align*}
                    & \ \ d(\pi_1, \pi_2) \ > \ \tau \\
\Longleftrightarrow & \ \ \sum_{w \in \Sigma^{\le n}} D^{n - |w|} \left|(\pi_1' - \pi_2') M'(w) (\eta')^\top\right| \ > \ 2 D^{n+2} \tau \\
\Longleftrightarrow & \ \ \sum_{w \in \Sigma^{\le n}} D^{n - |w|} \left|(\pi_1' - \pi_2') M'(w) (\eta')^\top\right| \ \ge \ 2 D^{n+2} \tau + 1
\end{align*}
The last equivalence holds as both sides of the inequality are integers.

So far $\T$ has operated deterministically and in polynomial time.
Now $\T$~branches nondeterministically.
It either enters one of $2 D^{n+2} \tau + 1$ rejecting computations or guesses a word $w \in \Sigma^{\le n}$.
In the latter case $\T$ then computes $a := D^{n - |w|} \left|(\pi_1' - \pi_2') M'(w) (\eta')^\top\right|$ and branches nondeterministically in one of~$a$ accepting computations.
Thus, by the computation above, $\T$ accepts on at least half of its computations if and only if $d(\pi_1, \pi_2) > \tau$.

The modification required for the non-strict threshold-distance problem is straightforward: replace $2 D^{n+2} \tau + 1$ with $2 D^{n+2} \tau$.
\end{proof}

In Remark~\ref{rem-dk} we mention that the PP upper bound generalizes to all~$d_k$.
Indeed, we have:
\[
| \pi_1(w) - \pi_2(w) |^k
\ = \
\frac{D^{(n - |w|)k} \left|(\pi_1' - \pi_2') M'(w) (\eta')^\top \right|^k}{D^{(|w|+2)k}}
\]
The other modifications of the PP-membership argument in the proof of Theorem~\ref{thm-threshold-distance-acyclic} are straightforward.
\qed

\subsection{Proof of Lemma~\ref{lem-approximation-length-bound}}
Here is Lemma~\ref{lem-approximation-length-bound} from the main text:
\lemapproximationlengthbound*
\begin{proof}
Let $\pmin > 0$ be the smallest nonzero probability that appears in the description of~$\M$.
If $\pmin = 1$ then $\pi_i(w) = 0$ for all $w$ with $|w| \ge |Q|$.
So we can assume $\pmin < 1$ for the rest of the proof.
We also assume $0 < \lambda \le 1$.

For any state~$q$, consider the shortest path to a state~$q'$ with $\eta(q') > 0$.
The probability of taking this path (of length at most $|Q|-1$, hence emitting at most $|Q|-1$ labels) and then ending the word is at least~$\pmin^{|Q|}$, i.e., for all~$q$ we have
$\delta_q(\Sigma^{<|Q|}) \ge \pmin^{|Q|}$, equivalently, $\delta_q(\Sigma^{\ge |Q|}) \le 1-\pmin^{|Q|}$.
By repeating this argument it follows that for all $k \in \N$ we have
\begin{alignat}{2} \delta_q\left(\Sigma^{\ge k |Q|}\right) & \ \le \ \left(1-\pmin^{|Q|}\right)^k && \qquad\text{for all } q \in Q\,, \notag
\intertext{hence}
\pi_i\left(\Sigma^{\ge k |Q|}\right) & \ \le \ \left(1-\pmin^{|Q|}\right)^k && \qquad\text{for } i \in \{1,2\}\,.
\label{eq-approximation-length-bound-1}
\end{alignat}
Choose now $k \in \N$ such that $k \ge (-\ln \lambda) / \pmin^{|Q|}$.
Such $k$ can be computed in polynomial time.
Since $x \le -\ln(1-x)$ for $x < 1$, we have
\[
 k \ \ge \ \frac{\ln \lambda}{\ln \left(1-\pmin^{|Q|}\right)}
\qquad \text{and hence} \qquad
\left(1-\pmin^{|Q|}\right)^k \ \le \ \lambda\,.
\]
By~\eqref{eq-approximation-length-bound-1} it follows that $\pi_i\left(\Sigma^{\ge k |Q|}\right) \le \lambda$ holds for both $i \in \{1,2\}$.
\end{proof}

\subsection{Technical Details of the Proof of Lemma~\ref{lem-approx-tWi}} \label{app-ladner}
Here is Lemma~\ref{lem-approx-tWi} from the main text:
\lemapproxtWi*
\begin{proof}
We provide technical details for the proof given in the main text.

In more detail, Ladner defines the function class \#PSPACE in analogy to \#P and shows~\cite[Theorem~1]{Ladner89} that \#PSPACE = FPSPACE where FPSPACE is the class of functions with values in~$\N$ that are computable in PSPACE.
We can build a nondeterministic PSPACE Turing machine~$\T'$ that simulates the probabilistic branching of~$\T$ (from the main text) by nondeterministic branching:
The Turing machine~$\T'$ computes the product, $D \in \N$, of all denominators that appear in the encodings of $\pi_1, \pi_2, M, \eta$.
Whenever $\T$~would branch probabilistically, say with probability~$x$, the Turing machine $\T'$ branches nondeterministically in $x \cdot D$ computation branches;
note that $x \cdot D \in \N$.
If the sampled word~$w$ is shorter than~$n$, then $n-|w|$ dummy labels are sampled so that $\T'$~has $D^{n+2}$ computations in total (taking $\pi_1$ and $\eta$ into account).
Since \#PSPACE = FPSPACE, one can compute the number of accepting computations in PSPACE.
This number divided by the total number, $D^{n+2}$, of computations equals the acceptance probability of~$\T$, which equals $\pi_1(\tW_1)$.
\end{proof}

}{}

\end{document}